\documentclass[a4paper]{article}

\usepackage{amsfonts}
\usepackage{amssymb}
\usepackage{graphicx}
\usepackage{amsmath}
\usepackage[a4paper]{geometry}

\setcounter{MaxMatrixCols}{10}

\newtheorem{theorem}{Theorem}

\newtheorem{condition}{Condition}

\newtheorem{definition}{Definition}

\newtheorem{lemma}{Lemma}

\newtheorem{proposition}{Proposition}
\newtheorem{remark}{Remark}

\newenvironment{proof}[1][Proof]{\noindent\textbf{#1.} }{\ \rule{0.5em}{0.5em}}

\setlength{\parindent}{0in}
\setlength{\parskip}{0.1in}

\begin{document}

\title{Locally Adaptive Density Estimation on the Unit Sphere Using Needlets}

\bigskip

\author{\textsc{Audrey Kueh}\footnote{Statistical Laboratory, Department of Pure Mathematics and Mathematical Statistics, University of Cambridge. Wilberforce Road CB3 0WB, Cambridge, UK. This forms part of the author's PHD thesis written under the supervision of Richard Nickl, whose expertise, understanding, and patience helped shape this paper. I also have much gratitude to the referees who wrote such extensive reports helping me to improve this paper. All mistakes remain mine.}}

\date{November 2011}

\maketitle

\begin{abstract}
The problem of estimating a probability density function $f$ on the $d-1$-dimensional unit sphere $S^{d-1}$ from directional data using the needlet frame is considered. It is shown that the decay of needlet coefficients supported near a point $x \in S^{d-1}$ of a function $f: S^{d-1} \to \mathbb R$ depends only on local H\"{o}lder continuity properties of $f$ at $x$. This is then used to show that the thresholded needlet estimator introduced in Baldi, Kerkyacharian, Marinucci and Picard \cite{Baldi} adapts to the local regularity properties of $f$. Moreover an adaptive confidence interval for $f$ based on the thresholded needlet estimator is proposed, which is asymptotically honest over suitable classes of locally H\"{o}lderian densities.
\end{abstract}

\textbf{Key words and Phrases:} Spherical Density Estimation, Local H\"{o}lder continuity, Minimax Bounds, Needlets, Thresholding, Confidence Intervals \\
\textbf{MSC Classification Numbers:} 42C40, 60E15, 62G07

\section{Introduction}
Let $S^{d-1}$ denote the surface of the unit sphere in $\mathbb{R}^d$ and let $X = \{X_1,\,X_2,\ldots,\,X_n\}$ be an independent and identically distributed sample of $n$ values from some probability density function $f: S^{d-1} \rightarrow \mathbb{R}$. Our goal is to estimate $f$ from the sample. A classical method for doing this is by kernel density estimation, see \cite{cross, plug}. In the past few years functions known as needlets have been constructed, which have given us a powerful new tool to tackle this problem. Needlets are effectively built on the spherical harmonics to form a tight frame for the space $L^2(S^{d-1})$ of square-integrable functions on $S^{d-1}$, in such a way that they have a localised projection kernel (see \cite{decomp, local}). Baldi, Kerkyacharian, Marinucci and Picard \cite{Baldi} have shown how to use the needlet frame combined with the standard thresholding techniques to construct an estimator that achieves the minimax convergence rates (up to $\log$ terms) over the usual Besov spaces in the $L^p$ norms, $1 \le p \le \infty$. These Besov spaces contain functions which can be approximated well by spherical polynomials globally on $S^{d-1}$ in the $L^p$ norms, and thus model homogeneous smoothness properties of functions on $S^{d-1}$.

However these results do not address spatially inhomogeneous smoothness properties of $f$, for which there are currently no results in the literature for functions on $S^{d-1}$ (although results for functions on the real line exist as in \cite{Jaffard2, Lepski}). For example, the density $f$ could be $t$-differentiable except for a single point $y$ where it behaves locally like $d(x, y)^{t'}$ for some noninteger $t' < t$, $d$ being the geodesic distance on $S^{d-1}$, which means that the function cannot be $t$-differentiable globally. We thus define $f: S^{d-1} \to \mathbb R$ to be locally $t$-H\"{o}lder-continuous at $x \in S^{d-1}$ if it can be approximated locally by a $\lfloor t \rfloor$-th degree polynomial with suitable error bounds, as suggested by Jaffard \cite{Jaffard2}. We show that the local needlet coefficients and the approximation errors from the corresponding local needlet projections of such functions obey decay properties that reflect only the pointwise H\"{o}lderian regularity properties -- a result that does not follow from the global needlet characterisation of Besov spaces on $S^{d-1}$ obtained in \cite{decomp}. This is the analogue of the results for wavelets by Andersson \cite{char}. This can be used to prove explicitly that the thresholded needlet estimator from \cite{Baldi} is locally minimax-optimal within logarithmic factors. This is the subject of our first result, Theorem~\ref{crazycheese}. We note here that these logarithmic factors are probably necessary for they also appear (and are necessary) in estimations of densities on the real line \cite{Lepski} \cite{first2}.

A next challenge is to construct a confidence interval for the unknown function $f$ at $x \in S^{d-1}$. One can use Bernstein's inequality to find a confidence interval for each unknown needlet coefficient, centered at the corresponding empirical needlet coefficient. To create a confidence interval for $f$ centered at the thresholded needlet estimator we create a confidence interval around each \textit{non-thresholded coefficient} and sum the result. This procedure is shown to give confidence intervals of adaptive expected length. Proving coverage will thus require some assumptions -- it is well known from Low \cite{Low} that adaptive and honest confidence intervals cannot exist over the usual smoothness classes. The method indicated by \cite{nickl, winner, more} is to assume that the underlying function satisfies a further lower bound condition on the local decay of the needlet coefficients. Indeed we show in Theorem~\ref{FINISHHIM} that the proposed confidence interval is asymptotically honest over functions satisfying this condition. However, in contrast with these results, our confidence intervals are spatially adaptive. The practical implementation of this estimator is beyond the scope of this paper; it is hoped that a further paper will be published on this.

The format of the paper is as follows: We first summarise the construction of needlets in Section~\ref{construction}. Then, we define the local regularity spaces $C^t_{M,\delta}(x)$ and show that the needlet coefficients of such functions obey appropriate decay properties. We also find a lower bound for the minimax error of pointwise estimators for densities in these spaces. Following that, we introduce the hard thresholded estimator and use these properties to show it is locally near-minimax-optimal and create asymptotic confidence intervals around it. Finally all proofs will be given at the end.

\section{Needlets and Local Regularity Properties of Functions}
\subsection{The Needlet Frame}
\label{construction}
We first start with a review of spherical harmonics; more details can be found in Stein and Weiss \cite{Stein} or in Faraut \cite{faraut}. Let $L^2(S^{d-1})$ be the space of square integrable functions on $S^{d-1}$ with the natural inner product $$<f,g> = \int\limits_{S^{d-1}} f(x) g(x)\,dx$$ where $dx$ is normalised such that $\int_{S^{d-1}}dx$ equals the Lebesgue measure $\omega_{d-1}$ of $S^{d-1}$. Let further $H_k(S^{d-1})$ be the space of spherical harmonics of degree $k$. Then:
\begin{equation*}
L^2(S^{d-1}) = \bigoplus_{k \geq 0} H_k(S^{d-1})
\end{equation*}
with convergence in $L^2(S^{d-1})$. If we denote the projector kernels onto the spaces $\{H_k\}_{k\geq 0}$ to be $\{Z_k(\cdot,\cdot)\}_{k\geq 0}$, we thus have:
\begin{equation*}
f(x) = \sum_{k=0}^{\infty} \int_{S^{d-1}} Z^k(x,y) f(y)\,dy
\end{equation*}

with convergence in $L^2(S^{d-1})$. It is well known that $$Z^k(x,y) = \frac{2k+d-2}{(d-2)\omega_{d-1}} P_k^{(d-2)/2}(x.y)$$ where $P_k^{(d-2)/2}$ is the corresponding ultraspherical (or Gegenbauer) polynomial and $(\cdot,\cdot)$ is the Euclidean inner product in $\mathbb R^d$. Since for fixed $x$, the kernels $Z^k(x,\cdot)$ are themselves spherical harmonics of degree $k$, we have:
\begin{equation}
\label{ort}
\int_{S^{d-1}} Z^k(x,y) Z^m(z,y)\,dy = \delta_{km} Z^k(x,z)
\end{equation}

The rest of this section follows Narcowich, Petrushev and Ward \cite{decomp, local} who first showed how to construct a needlet frame from the spherical harmonics. We start with a \textit{Littlewood-Paley decomposition}. Let $a$ be a decreasing $C^{\infty}$ function on $\mathbb{R}^+$, compactly supported on $[0,1]$ such that $a(x)=1$ when $x \in \big[0, \frac{1}{2}\big]$. We also define $b = a(\frac{x}{2}) - a(x)$ which is compactly supported on $\big[\frac{1}{2},2\big]$. We define:
\begin{align*}
A_j(f)(x) &:= \int_{S^{d-1}} A_j(x,y)f(y)\,dy \\
A_j(x,y) &:= \sum_k a\Big(\frac{k}{2^j}\Big)Z^k(x,y) \\
B_j(f)(x) &:= \int_{S^{d-1}} B_j(x,y)f(y)\,dy \\
B_j(x,y) &:= \sum_k b\Big(\frac{k}{2^j}\Big)Z^k(x,y) 
\end{align*}

It is obvious that $A_j(f)$ converges to $f$ in the $L^2(S^{d-1})$ norm. More importantly, the kernel $A_j$ can be shown to be localised, see \cite{decomp}, Eqn 1.2: for all $m>0$ there exists $c_m>0$ such that for all $x,y,j,d$:
\begin{equation}
\label{aimpt}
|A_j(x,y)| \leq \frac{c_m 2^{j(d-1)}}{(1+2^jd(x,y))^m}
\end{equation}
where $d(x,y)$ is the geodesic distance between $x,y \in S^{d-1}$. This localisation implies that the error $|A_j(f)(x) - f(x)|$ decays exponentially in $j$ for certain classes of functions, see Proposition~\ref{bias} later on. Now, if we define:
\begin{equation*}
C_j(x,y) := \sum_k \sqrt{b\Big(\frac{k}{2^j}\Big)} Z^k(x,y) = \sum_{2^{j-1}< k<2^{j+1}}\sqrt{b\Big(\frac{k}{2^j}\Big)} Z^k(x,y)
\end{equation*}

Then, by repeated usage of equation~\ref{ort}, we can \textit{split} $B_j$:
\begin{equation*}
B_j(x,y) = \int_{S^{d-1}} C_j(x,u) C_j(y,u) du 
\end{equation*}

Finally, we notice that for fixed $x$ and $y$, $u \rightarrow C_j(x,u) C_j(u,y)$ is a polynomial of degree $2^{j+2}$. By the \textit{quadrature} formulae in Section 4.2 in \cite{local} and Theorem 2.8 in \cite{decomp}, there exists $c>0$, such that for all $j$, there is a set of points $\mathcal{H}_j = \{x_1,x_2\ldots x_k\}$ with $d(x_i,x_j) > c2^{-j}$ for $x_i \neq x_j$ and a set of positive values indexed by the set $\{\lambda_\eta\}_{\eta \in \mathcal{H}_j}$ such that for all polynomials $f$ of degree $2^{j+2}$:

\begin{equation*}
\int_{S^{d-1}} f(x)\,dx = \sum_{\eta \in \mathcal{H}_j} \lambda_{\eta} f(\eta)
\end{equation*}

Hence, we obtain the following expression for $B_j(f)$:
\begin{equation*}
B_j(f)(x) = \int_{S^{d-1}} B_j(x,y)f(y)\,dy = \sum_{\eta \in \mathcal{H}_j} \lambda_{\eta} C_j(x,\eta) \int_{S^{d-1}} C_j(y,\eta)f(y)\,dy
\end{equation*}

This motivates the definition of needlets $\psi_{j\eta} = \sqrt{\lambda_\eta} C_j(., \eta)$ and if $\beta_{i\eta} = <f, \psi_{i\eta}>$, then: 

\begin{equation*}
A_j(f) = \frac{<f, 1>}{\omega_{d-1}} + \sum_{i=0}^{j-1} B_i(f) = \frac{<f, 1>}{\omega_{d-1}} + \sum_{i=0}^{j-1} \sum_{\eta \in \mathcal{H}_i} \beta_{i\eta} \psi_{i\eta}
\end{equation*}

From Corollary 5.3 on \cite{local} we know that the $\psi_{i\eta}(x)$'s are also localised; for all $m>0$ there exists $c_m>0$ such that for all $\eta,y,i,d$:
\begin{equation}
\label{psiimpt}
|\psi_{i\eta}(y)| \leq \frac{c_m 2^{i(d-1)/2}}{(1+2^id(y,\eta))^m}.
\end{equation}

We will need the following lemma, which follows from the previous equation:
\begin{lemma}
\label{salamander}
There exists a constant $C_1$ such that for all $i$, $$\sum_{\eta \in \mathcal{H}_i} |\psi_{i\eta}(y)| \leq C_1 2^{i(d-1)/2}$$
\end{lemma}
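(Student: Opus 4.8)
The plan is to bound the sum by partitioning the quadrature points $\eta \in \mathcal{H}_i$ into geodesic annuli centered at $y$ and using the pointwise localisation bound~(\ref{psiimpt}) on each annulus. Concretely, for each integer $\ell \geq 0$ let $\mathcal{A}_\ell = \{\eta \in \mathcal{H}_i : \ell \leq 2^i d(y,\eta) < \ell+1\}$, so that $\mathcal{H}_i = \bigcup_{\ell \geq 0} \mathcal{A}_\ell$. On $\mathcal{A}_\ell$ the bound~(\ref{psiimpt}) with exponent $m$ gives $|\psi_{i\eta}(y)| \leq c_m 2^{i(d-1)/2}(1+\ell)^{-m}$. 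Hence
\begin{equation*}
\sum_{\eta \in \mathcal{H}_i} |\psi_{i\eta}(y)| \leq c_m 2^{i(d-1)/2} \sum_{\ell \geq 0} \frac{\#\mathcal{A}_\ell}{(1+\ell)^m},
\end{equation*}
so everything reduces to showing $\#\mathcal{A}_\ell \lesssim (1+\ell)^{d-1}$ uniformly in $i$, after which choosing $m = d$ (say) makes the series $\sum_\ell (1+\ell)^{d-1-m}$ converge and yields the claimed constant $C_1$.

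The counting estimate is where the separation property of $\mathcal{H}_i$ enters: the construction guarantees $d(x_a,x_b) > c\,2^{-i}$ for distinct points of $\mathcal{H}_i$, so the geodesic balls of radius $\tfrac{c}{2}2^{-i}$ around the points of $\mathcal{A}_\ell$ are pairwise disjoint, and they all lie inside the geodesic ball $B(y, (\ell+1)2^{-i} + \tfrac{c}{2}2^{-i})$. Comparing volumes on $S^{d-1}$ — the volume of a geodesic ball of radius $r \leq \pi$ is comparable to $r^{d-1}$ — gives $\#\mathcal{A}_\ell \cdot (c\,2^{-i}/2)^{d-1} \lesssim ((\ell+1+c/2)2^{-i})^{d-1}$, i.e. $\#\mathcal{A}_\ell \leq C(d)(1+\ell)^{d-1}$ with a constant depending only on $c$ and $d$, independent of $i$. (One should note $2^{-i} \lesssim 1$ so radii stay bounded and the volume comparison is valid; the case of small $i$ is harmless since $\mathcal{H}_i$ is then a bounded set.)

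The main obstacle is making this volume-comparison argument clean and uniform in $i$ and $d$: one needs the two-sided comparison $c'r^{d-1} \leq \mathrm{vol}(B(z,r)) \leq c''r^{d-1}$ for geodesic balls on $S^{d-1}$ with constants that can be tracked, and one must handle the packing bound carefully near the point $y$ (the $\ell = 0$ shell) and for radii approaching the diameter $\pi$. In practice it is cleanest to absorb the global diameter issue by noting that once $(\ell+1)2^{-i} \gtrsim 1$ the shell count is trivially bounded by $\#\mathcal{H}_i \lesssim 2^{i(d-1)}$, which is itself dominated by $(1+\ell)^{d-1}$ in that regime, so the same polynomial bound holds throughout. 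Assembling the pieces and summing the geometric-type series completes the proof.
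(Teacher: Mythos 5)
Your argument is sound in outline and rests on the same two ingredients as the paper's proof: the pointwise localisation bound~(\ref{psiimpt}) and the summability of the resulting tails over the separated net $\mathcal{H}_i$. The difference is that the paper disposes of the second ingredient in one line by citing Lemma~\ref{sumbound} (Lemma 6 of Baldi et al.), i.e.\ $\sum_{\eta\in\mathcal{H}_i}(1+2^i d(y,\eta))^{-3}\leq C_6$, whereas you prove the underlying counting estimate from scratch via annular decomposition and a packing/volume comparison. Your route is longer but self-contained, and it makes explicit the dependence of the required decay exponent on the dimension, which the paper's fixed exponent $3$ glosses over for general $d$. One concrete slip: taking $m=d$ does \emph{not} make $\sum_{\ell\geq 0}(1+\ell)^{d-1-m}$ converge --- that is the harmonic series. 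You need $m-(d-1)>1$, i.e.\ $m>d$; take $m=d+1$, which is harmless since~(\ref{psiimpt}) holds for every $m>0$. With that correction, and with the large-$\ell$ shells handled as you indicate (once $(\ell+1)2^{-i}\gtrsim 1$ the trivial bound $\#\mathcal{H}_i\lesssim 2^{i(d-1)}\lesssim(1+\ell)^{d-1}$ applies), the proof is complete.
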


\subsection{Regularity spaces of functions}
\label{local}
We will use the standard definition of H\"{o}lder continuity for $t \in [0,1]$:
\begin{definition}
Let $t \in [0,1],\,\delta,\,M \geq 0$ and $x \in S^{d-1}$. We say $f$ is $t$-H\"{o}lder-continuous at $x$ with parameters $M,\,\delta$ if:
\begin{eqnarray*}
\sup_{y \in B(x,\delta), y \ne x} \frac{|f(y) - f(x)|}{d(x,y)^t} \leq M \\
\end{eqnarray*}
We also say that $f \in C_{M,\delta}^t(x)$.
\end{definition}

However, the standard way for defining H\"{o}lder continuity for $t>1$ is to use differentiability and charts. This has proved awkward to work with and so we have gone with an alternative. We first note that functions which are smooth can be approximated by polynomials with quantitative error bounds. If we have a function $g: \mathbb{R} \rightarrow \mathbb{R}$ which is $t$-H\"{o}lder-continuous in a $\delta$ ball around 0, then by the mean value theorem (or Taylor's theorem), for all $x \in B(0,\delta)$ there exists $c \in [0,x]$ such that:
\begin{equation*}
g(x) = g(0) + g'(0)x + \ldots + \frac{g^{\lfloor t\rfloor}(0)}{\lfloor t\rfloor!} x^{\lfloor t\rfloor} + \frac{g^{\lfloor t\rfloor}(c) - g^{\lfloor t\rfloor}(0)}{\lfloor t\rfloor!} x^{\lfloor t\rfloor}
\end{equation*}

This suggests to define a H\"{o}lder norm locally in a $\delta$-neighborhood of 0 as:
\begin{equation*}
||g||_{C^t(B(0,\delta))} = \max_{i=0,1,\ldots,\lfloor t \rfloor} |g^{i}(0)| + \sup_{x \in B(0,\delta)} \left|\frac{g^{\lfloor t\rfloor}(x) - g^{\lfloor t\rfloor}(0)}{x^{t - \lfloor t\rfloor}}\right|
\end{equation*}

We can thus write $g(x) = P(x) + R(x)$, where $P(x)$ is a polynomial of degree $\lfloor t\rfloor$ bounded by $||g||_{C^t(B(0,\delta))}\sum_{i=0}^{\lfloor t\rfloor} \frac{|x|^i}{i!} \leq ||g||_{C^t(B(0,\delta))}e^{|x|}$ and the remainder $R$ is bounded on $B(0,\delta)$ by $x^t||g||_{C^t(B(0,\delta))}$. This definition can easily be generalised to $d$ dimensions, and thus motivates the definition for functions $f: S^{d-1} \rightarrow \mathbb{R}$ to be locally $t$-H\"{o}lder-continuous:

\begin{definition}
\label{lolzcat}
Let $t,\,\delta,\,M \geq 0$ and $x \in S^{d-1}$. We say $f$ is $t$-H\"{o}lder-continuous at $x$ with parameters $M,\,\delta$ if there exists a spherical polynomial $P_f:=P_{f,x}$ of degree $\lfloor t\rfloor$ such that:
\begin{eqnarray*}
\sup_{y \in B(x,\delta), y \ne x} \frac{|f(y) - P_f(y)|}{d(x,y)^t} \leq M \\
||P_f||_\infty \leq M
\end{eqnarray*}
We also say that $f \in C_{M,\delta}^t(x)$.
\end{definition}

For further justification of the above definition, we consider a function $f$ which satisfies the traditional definition: 

\begin{definition}
Let $U \subset {S^{d-1}}$ be a neighbourhood of $x$ and let $C: U \rightarrow R^{d-1}$ be a chart. Then, $f$ is $t$-H\"{o}lder-continuous at $x$ if $f \circ C^{-1}$ is $t$-H\"{o}lder-continuous in a neighbourhood of $C(x)$ for all $C$. 
\end{definition}

We can show that $f$ satisfies Definition~\ref{lolzcat} for some $M$. Without loss of generality, let $x = (1,0,\ldots,0)$ and let $T_x$ be the tangent space attached to $x$; in this example, this is the plane such that $x_1=1$. We then have a local chart from the eastern hemisphere ($x_1>0$) to $T_x$: $C(x_1,x_2,\ldots,x_d) = (1,x_2,\ldots,x_d)$. By our hypothesis, $\hat{f} = f \circ C^{-1}$ is locally $t$-H\"{o}lder-continuous at $B(x,\delta)$ with $\delta$ and $M$ depending on $f$. Now if we extend $\hat{f}$ to $g$ on $\mathbb{B}(x,\delta) \in \mathbb{R}^d$ by $g(x_1,x_2,\ldots,x_d) = \hat{f}(1,x_2,\ldots,x_d)$, $g$ too is locally $t$-H\"{o}lder-continuous at $B(x,\delta)$ with $\delta$ and $M$ depending on $f$, so using Taylor's theorem, $g$ can be estimated in this ball by a polynomial $P$ of degree $\lfloor t \rfloor$ such that $|g(y) - P(y)| \leq Md(x,y)^t$. Now we use the fact that all polynomials restricted to $S^{d-1}$ can be written as a sum of spherical harmonics, see \cite{Stein}, and this yields a $P_f$, which is bounded because it is a continuous function on a compact set. This implies that Definition~\ref{lolzcat} is sensible. 

In contrast to spherical harmonics, the needlet frame allows us to describe local regularity properties of functions $f:S^{d-1} \to \mathbb R$ by the decay of the 'local' needlet series representation of $f$. We give three instances of this fact, all of which shall be useful in what follows. We emphasise that these facts, although not difficult to prove, do not follow from the characterisation of Besov spaces in \cite{decomp}.

\begin{proposition}
\label{bias}
Let $x \in S^{d-1}$ and $f \in C_{M, \delta}^t(x)$. Then there exists $C_2,\,C_3$ only dependent on $M,\, \psi,\, \delta,\, d,\, t,\, ||f||_\infty$ such that for all $i,\,j$:
\begin{eqnarray*}
|A_j(f)(x) - f(x)| \leq C_2 2^{-jt} \\
\sum_{\eta \in \mathcal{H}_i} |\beta_{i\eta} \psi_{i\eta}(x)| \leq C_3 2^{-it}
\end{eqnarray*}
Also, let $K > 0$ be fixed. Then there exists $C_4$ only dependent on $M,\, \psi,\, \delta,\, d,\, t,\, ||f||_\infty,\, K$ such that for all $i$ and $\eta \in \mathcal{H}_i$ that satisfies $d(x,\eta) \leq K2^{-i}$, we have:
\begin{equation*}
|\beta_{i\eta}| \leq C_4 2^{-i(2t+d-1)/2}.
\end{equation*}
\end{proposition}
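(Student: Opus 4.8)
The plan is to exploit the fact that $f$ agrees with a polynomial $P_f$ of degree $\lfloor t\rfloor$ up to an error controlled by $d(x,\cdot)^t$ on $B(x,\delta)$, together with the strong localisation of the kernels $A_j$ and the needlets $\psi_{i\eta}$ near $x$. First I would treat the bias bound $|A_j(f)(x)-f(x)|$. Since $A_j$ reproduces polynomials of degree up to roughly $2^{j-1}$ (because $a(k/2^j)=1$ for $k\le 2^{j-1}$, so $A_j(P_f)=P_f$ once $2^{j-1}\ge\lfloor t\rfloor$), I would write $A_j(f)(x)-f(x) = A_j(f-P_f)(x) - (f-P_f)(x)$ and note $(f-P_f)(x)=f(x)-P_f(x)$; actually it is cleanest to write $f = P_f + R$ with $|R(y)|\le M d(x,y)^t$ on $B(x,\delta)$, so $A_j(f)(x)-f(x) = A_j(R)(x) - R(x)$, and $R(x)=0$ since $P_f(x)=f(x)$ follows from the definition applied in the limit $y\to x$ — or more carefully, just bound $|R(x)|\le \limsup$ which is $0$. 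Then $|A_j(R)(x)| \le \int_{S^{d-1}}|A_j(x,y)|\,|R(y)|\,dy$. I split this integral into $B(x,\delta)$ and its complement. On $B(x,\delta)$ I insert $|R(y)|\le M d(x,y)^t$ and the localisation bound (\ref{aimpt}) with $m$ chosen large (say $m > t + d$), then substitute $r = d(x,y)$ and integrate in polar-type coordinates on the sphere, where the volume element near radius $r$ behaves like $r^{d-2}\,dr$; the integral $\int_0^\infty \frac{2^{j(d-1)} r^t r^{d-2}}{(1+2^j r)^m}\,dr$ rescales via $u=2^j r$ to give $C 2^{-jt}$. On the complement $d(x,y)>\delta$ I use $|R(y)|\le |f(y)|+|P_f(y)| \le \|f\|_\infty + M$ (here I use $\|P_f\|_\infty\le M$) and the fact that $(1+2^j\delta)^{-m}$ decays faster than any power of $2^{-j}$, so that term is negligible. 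This yields $|A_j(f)(x)-f(x)|\le C_2 2^{-jt}$.

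For the second inequality, I would use the telescoping identity $B_i(f) = A_{i+1}(f) - A_i(f)$ at the level of kernels — more precisely $b(k/2^i) = a(k/2^{i+1}) - a(k/2^i)$ gives $B_i(x,y) = A_{i+1}(x,y) - A_i(x,y)$ — but since we need the sum $\sum_\eta |\beta_{i\eta}\psi_{i\eta}(x)|$ of \emph{absolute values}, not the signed sum $B_i(f)(x)$, this identity alone does not suffice. Instead I would argue directly: $\beta_{i\eta} = \langle f,\psi_{i\eta}\rangle = \langle R, \psi_{i\eta}\rangle$ provided the needlet $\psi_{i\eta}$ annihilates polynomials of degree $\le\lfloor t\rfloor$, which holds because $\psi_{i\eta}$ is built from $\sqrt{b(k/2^i)}Z^k$ with $b(k/2^i)=0$ for $k\le 2^{i-1}$, hence $\psi_{i\eta}\perp H_k$ for all $k\le\lfloor t\rfloor$ once $2^{i-1}\ge\lfloor t\rfloor$; for the finitely many small $i$ one absorbs the contribution into the constant using $|\beta_{i\eta}|\le\|f\|_\infty\|\psi_{i\eta}\|_1$ and Lemma~\ref{salamander}. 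So $|\beta_{i\eta}\psi_{i\eta}(x)| \le |\psi_{i\eta}(x)|\int_{S^{d-1}}|\psi_{i\eta}(y)|\,|R(y)|\,dy$. Summing over $\eta\in\mathcal H_i$ and pulling the sum inside, $\sum_\eta |\psi_{i\eta}(x)|\,|\psi_{i\eta}(y)| \le \big(\sum_\eta|\psi_{i\eta}(x)|\big)\max_\eta\cdots$ — but a sharper route is to note that using (\ref{psiimpt}) twice and the separation $d(\eta,\eta')>c2^{-i}$ of the quadrature points, the "double localisation" $\sum_\eta |\psi_{i\eta}(x)|\,|\psi_{i\eta}(y)| \le \frac{c_m' 2^{i(d-1)}}{(1+2^i d(x,y))^m}$ holds (this is the standard fact that the needlet "covariance at level $i$" inherits the localisation of $A_i$; it follows by a packing argument replacing the sum over $\eta$ by an integral). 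Then $\sum_\eta|\beta_{i\eta}\psi_{i\eta}(x)| \le \int_{S^{d-1}} \frac{c_m' 2^{i(d-1)}}{(1+2^i d(x,y))^m}|R(y)|\,dy$, which is exactly the same integral estimated in the first part, giving $C_3 2^{-it}$.

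For the third claim, fix $\eta$ with $d(x,\eta)\le K2^{-i}$. Again $|\beta_{i\eta}| = |\langle R,\psi_{i\eta}\rangle| \le \int_{S^{d-1}}|\psi_{i\eta}(y)|\,|R(y)|\,dy$ (modulo the small-$i$ caveat). Split at $d(x,y)=\delta$: on the far part use boundedness of $R$ and rapid decay of (\ref{psiimpt}) as before; on $B(x,\delta)$ use $|R(y)|\le M d(x,y)^t$ and $|\psi_{i\eta}(y)|\le \frac{c_m 2^{i(d-1)/2}}{(1+2^i d(y,\eta))^m}$. Since $d(x,\eta)\le K2^{-i}$, the triangle inequality gives $d(x,y)\le d(x,\eta)+d(\eta,y) \le K2^{-i}+d(\eta,y)$, so $d(x,y)^t \le C(2^{-it} + d(\eta,y)^t)$. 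The resulting two integrals $\int \frac{2^{i(d-1)/2} 2^{-it}\, \cdot}{(1+2^i d(\eta,y))^m}dy$ and $\int \frac{2^{i(d-1)/2} d(\eta,y)^t}{(1+2^i d(\eta,y))^m}dy$ both rescale (the volume element near $\eta$ contributing $d(\eta,y)^{d-2}$) to $C 2^{i(d-1)/2} 2^{-it} 2^{-i(d-1)} = C 2^{-i(2t+d-1)/2}$, as required, with the constant depending on $K$ through the triangle-inequality step.

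The main obstacle I anticipate is establishing the double-localisation estimate $\sum_{\eta\in\mathcal H_i}|\psi_{i\eta}(x)|\,|\psi_{i\eta}(y)| \le c_m' 2^{i(d-1)}(1+2^i d(x,y))^{-m}$ rigorously: the naive bound from Lemma~\ref{salamander} times a pointwise bound on one factor loses the localisation in $d(x,y)$. One must carry out the packing/summation argument — covering $S^{d-1}$ by caps of radius $\sim 2^{-i}$, using the separation of $\mathcal H_i$ to control the number of $\eta$ in each annulus around $x$, and summing a geometric-type series in the distance from $x$ to $y$ — which is somewhat technical though standard in the needlet literature. If one prefers to avoid this, an alternative is to bound the second inequality by $\big(\sum_\eta|\psi_{i\eta}(x)|\big)\sup_\eta\int|\psi_{i\eta}(y)||R(y)|dy$ and note that the supremum over $\eta$ of the latter integral is itself $\le C 2^{-i(2t+d-1)/2}$ when $\eta$ is near $x$ and exponentially small otherwise — but then summing over \emph{all} $\eta$ (not just those near $x$) reintroduces the need to control the tail, so the double-localisation route is cleaner. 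Everything else is routine: reduction to the remainder $R$ via polynomial reproduction/annihilation, splitting at radius $\delta$, and the rescaled radial integrals.
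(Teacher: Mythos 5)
Your proposal is correct and follows essentially the same route as the paper: reduce to the remainder $f-P_f$ via polynomial reproduction by $A_j$ and annihilation by $\psi_{i\eta}$ (ignoring the finitely many small levels), split each integral at radius $\delta$, and evaluate the rescaled radial integrals. The ``double localisation'' obstacle you flag for the second inequality is handled in the paper exactly as you anticipate, via the elementary bound $(1+2^id(y,\eta))(1+2^id(x,\eta))\geq 1+2^id(x,y)$ together with the packing estimate $\sum_{\eta}(1+2^id(x,\eta))^{-3}\leq C_6$ quoted from Baldi et al.
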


We also state the minimax optimal rate for density estimation for the spaces we have defined:
\begin{theorem} \label{minimax} Let $t>0,\,\eta\in S^{d-1}$ and let $f:S^{d-1} \to [0,\infty)$ be a probability density. Further, let $X = \{X_1,\,X_2,\ldots,\,X_n\}$ be an independent and identically distributed sample of $n$ values from $f$ and $\mathcal{F}_n =\{\hat{f}:S_{d-1}^n \to [0,\infty)\}$. Then there exists $c,\,\delta,\,M>0$  such that:
\begin{equation*}
\liminf_n \inf_{\hat{f}\in\mathcal{F}_n} \sup_{f\in C_{M,\delta}^t(\eta)} \mathbb{E}_f \Big|n^{\frac{t}{2t+d-1}} (\hat{f}(X_1, X_2,\ldots,X_n) - f(\eta))\Big| \geq c
\end{equation*}
\end{theorem}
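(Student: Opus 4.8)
I would prove this by the standard two-point (Le Cam) argument for pointwise minimax lower bounds. First record the reduction: suppose $f_0,f_1\in C_{M,\delta}^t(\eta)$ are probability densities with $|f_1(\eta)-f_0(\eta)|=s_n$ and $\|f_0^{\otimes n}-f_1^{\otimes n}\|_{TV}\le \tfrac12$. Then for any estimator $\hat f\in\mathcal F_n$, applying Markov's inequality to the event $A=\{|\hat f-f_0(\eta)|\ge s_n/2\}$ (note $A^c\subseteq\{|\hat f-f_1(\eta)|\ge s_n/2\}$ by the triangle inequality) together with the elementary bound $P_0(A)+P_1(A^c)\ge 1-\|P_0-P_1\|_{TV}$ gives $\mathbb E_{f_0}|\hat f-f_0(\eta)|+\mathbb E_{f_1}|\hat f-f_1(\eta)|\ge \tfrac{s_n}{2}\bigl(1-\|f_0^{\otimes n}-f_1^{\otimes n}\|_{TV}\bigr)\ge \tfrac{s_n}{4}$, so $\max_{j\in\{0,1\}}\mathbb E_{f_j}|\hat f-f_j(\eta)|\ge s_n/8$. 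It therefore suffices to produce such a pair with $s_n\asymp n^{-t/(2t+d-1)}$.

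\textbf{Construction of the two densities.} Let $f_0\equiv 1/\omega_{d-1}$ be the uniform density; it lies in $C_{M,\delta}^t(\eta)$ for any $M\ge 1/\omega_{d-1}$ with $P_{f_0}=f_0$ and zero remainder. Fix a smooth chart $\phi$ around $\eta$ (e.g.\ geodesic normal coordinates) with $\phi(\eta)=0$ and $|\phi(y)|$ comparable to $d(\eta,y)$ near $\eta$, and a cutoff $\rho:\mathbb R^{d-1}\to[0,1]$ with $\rho\equiv 1$ on $B(0,\tfrac12)$ and $\rho\equiv 0$ off $B(0,1)$. For a bandwidth $h_n\to0$ and a small constant $c>0$ set $v_n(y)=h_n^{\,t}\rho(\phi(y)/h_n)$ (extended by $0$) and $f_1=Z_n^{-1}(f_0+cv_n)$ with $Z_n=\int_{S^{d-1}}(f_0+cv_n)=1+O(h_n^{\,t+d-1})$, so $f_1$ is a smooth probability density, nonnegative for $n$ large. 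The decisive point is that $v_n$ is \emph{identically equal to} $h_n^{\,t}$ on a whole neighbourhood of $\eta$ (namely $\{d(\eta,y)\lesssim h_n\}$); hence, taking $P_{f_1}$ to be the constant $Z_n^{-1}(1/\omega_{d-1}+ch_n^{\,t})$ — a degree-$0$ spherical polynomial with $\|P_{f_1}\|_\infty$ bounded uniformly in $n$ — the remainder $f_1-P_{f_1}$ vanishes near $\eta$ and is bounded by $cZ_n^{-1}h_n^{\,t}$ elsewhere, so $|f_1(y)-P_{f_1}(y)|\le M\,d(\eta,y)^t$ on $B(\eta,\delta)$ for a constant $M$ independent of $n$ (using $d(\eta,y)\gtrsim h_n$ off the region where the bump is constant). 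Thus $f_1\in C_{M,\delta}^t(\eta)$ uniformly in $n$ once $c$ is chosen small. Making the perturbation locally constant at $\eta$ is exactly what lets the construction work for every $t>0$ under Definition~\ref{lolzcat} without having to exhibit a matching spherical polynomial of degree $\lfloor t\rfloor$.

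\textbf{Calibration.} A direct computation gives $|f_1(\eta)-f_0(\eta)|=ch_n^{\,t}(1+o(1))$, while $\chi^2(f_1\|f_0)=\omega_{d-1}\int_{S^{d-1}}(f_1-f_0)^2\asymp\int v_n^2\asymp h_n^{\,2t}\cdot h_n^{\,d-1}=h_n^{\,2t+d-1}$ (change of variables in the chart, the volume element being $1+O(h_n^2)$ near $\eta$, and $\int\rho^2>0$). Expanding the logarithm and using that the first-order term $\int f_0(f_1/f_0-1)=\int(f_1-f_0)=0$ cancels, one gets $\mathrm{KL}(f_0\|f_1)=\tfrac12\chi^2(f_1\|f_0)(1+o(1))$, hence $\mathrm{KL}(f_0^{\otimes n}\|f_1^{\otimes n})=n\,\mathrm{KL}(f_0\|f_1)\asymp n h_n^{\,2t+d-1}$. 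Choosing $h_n=(C_0/n)^{1/(2t+d-1)}$ and then $C_0$ (or $c$) small enough makes this Kullback--Leibler divergence $\le\tfrac18$ for $n$ large, so by Pinsker's inequality $\|f_0^{\otimes n}-f_1^{\otimes n}\|_{TV}\le\tfrac12$; meanwhile $s_n=|f_1(\eta)-f_0(\eta)|\asymp h_n^{\,t}\asymp n^{-t/(2t+d-1)}$. Substituting into the reduction of the first paragraph yields $\inf_{\hat f\in\mathcal F_n}\sup_{f\in C_{M,\delta}^t(\eta)}\mathbb E_f|\hat f-f(\eta)|\ge s_n/8\gtrsim n^{-t/(2t+d-1)}$, and multiplying by $n^{t/(2t+d-1)}$ and taking $\liminf_n$ gives the claimed bound with some positive constant.

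\textbf{Where the difficulty lies.} Everything information-theoretic here (the Le Cam reduction, the $\chi^2$/KL estimates, Pinsker) is routine, as is the chart change of variables. The one genuine subtlety is keeping $f_1$ inside the \emph{same} class $C_{M,\delta}^t(\eta)$ for all $n$ — in particular controlling the local H\"older bound at $\eta$ in Definition~\ref{lolzcat} when $t\ge1$, where one would a priori need a genuine degree-$\lfloor t\rfloor$ spherical polynomial approximating $f_1$ near $\eta$. Choosing the perturbation to be locally constant around $\eta$ removes this obstacle completely (the approximating polynomial is just a constant). One should also note that the exponent $2t+d-1$ is meaningful only for $d\ge2$ (for $S^0$ the statement is the trivial parametric lower bound), which is the regime of interest.
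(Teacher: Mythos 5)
Your argument is correct, and it is the same two-point (Le Cam) scheme in spirit as the paper's, but the two implementations differ in every concrete ingredient. The paper perturbs the uniform density by a single \emph{needlet}, $f_1=\tfrac{1}{\omega_{d-1}}+\tfrac{\psi_{i\eta}}{2\sqrt{\omega_{d-1}n}}$ with $2^{i}\sim n^{1/(2t+d-1)}$: the testing error is controlled not via KL and Pinsker but via a direct second-moment (essentially $\chi^2$) bound on the likelihood ratio (Proposition~\ref{blahblah}), using $\|\psi_{i\eta}\|_2\le1$ from Lemma~\ref{quad}; the pointwise separation $|f_1(\eta)-f_0(\eta)|\asymp 2^{i(d-1)/2}/\sqrt n\asymp n^{-t/(2t+d-1)}$ comes from the lower bound $\psi_{i\eta}(\eta)\ge C_7 2^{i(d-1)/2}$ of Lemma~\ref{lastbutnotleast}; and membership of $f_1$ in $C^t_{M,\delta}(\eta)$ is deduced indirectly, by placing $f_1$ in the Besov space $B^t_{\infty\infty}$ and invoking the embedding into global H\"older classes, which is why the paper's written proof is restricted to $0<t\le1$. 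Your construction replaces the needlet by a chart-based bump of height $h_n^t$ and width $h_n$ that is \emph{locally constant} at $\eta$, which makes the verification of Definition~\ref{lolzcat} elementary and valid for every $t>0$ (the approximating polynomial is a constant), at the price of a chart change of variables and the KL/Pinsker step; the calibration $nh_n^{2t+d-1}\asymp1$ is of course the same bias--information trade-off in both versions. Each route has its merits: the paper's stays entirely inside the needlet machinery it has already built (and reuses its lemmas), while yours is more self-contained and actually covers the full range of $t$ claimed in the theorem without appeal to the Besov--H\"older embedding. One small point worth making explicit if you write this up: a constant qualifies as the required ``spherical polynomial of degree $\lfloor t\rfloor$'' in Definition~\ref{lolzcat} under the usual convention that degree $k$ means degree at most $k$, which is how the paper itself uses the phrase.
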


\section{Localised Density Estimation by Needlets}

We first show that the linear needlet estimator $\hat{f}^L_J$:
\begin{equation}
\label{ll}
\hat{f}_J^L = \frac{1}{\omega_{d-1}} + \sum_{i=0}^{J-1} \sum_{\eta \in \mathcal{H}_i} \hat{\beta}_{i\eta} \psi_{i\eta}, ~~~\textit{where}~~~ \hat{\beta}_{i\eta} = \frac{1}{n}\sum_{k=1}^n \psi_{i\eta}(X_k)
\end{equation}
is minimax-optimal over the regularity space of functions $C_{M,\delta}^t(x)$ if we are allowed to pick $J$ as a function of the regularity $t$. We will need the following proposition to show that the estimated coefficients are not too far from the true values.
\begin{proposition}
\label{variance}
Let $x \in S^{d-1}$, $f: S^{d-1} \rightarrow \mathbb{R}$ be bounded. Further, let, $X = \{X_1,\,X_2,\ldots,\,X_n\}$ be an independent and identically distributed sample of $n$ values from some probability density function $f$, and $\hat{f}_j^L$ as defined in Equation~\ref{ll}. Then there exists a constant $C_5$ such that for all $i,\,j$:
\begin{eqnarray*}
\mathbb{E}(|\hat{\beta}_{i\eta} - \beta_{i\eta}|) \leq \sqrt{\frac{||f||_\infty}{n}} \\
\mathbb{E}|\hat{f}_j^L(x) - A_j(f)(x)| \leq C_5 \sqrt{||f||_\infty} 2^{j(d-1)/2} n^{-1/2}
\end{eqnarray*}
Furthermore, let $2^{i(d-1)} \leq \frac{n}{\log n}$, $v>0$ and $\kappa_{(v)} = \max(14v/(3\sqrt{\omega_{d-1}}),||f||_\infty \sqrt{\omega_{d-1}})$. Then: 
\begin{equation*}
\mathbb{P}\left(|\hat{\beta}_{i\eta} - \beta_{i\eta}| > \kappa \sqrt{\frac{\log n}{n}}\right) \leq 2 n^{- v}
\end{equation*}
\end{proposition}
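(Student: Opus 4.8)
The plan is to treat the three claims in order, since each builds on the previous. First, for the bound on $\mathbb E|\hat\beta_{i\eta}-\beta_{i\eta}|$, I would observe that $\hat\beta_{i\eta}$ is an empirical average of the i.i.d.\ bounded random variables $\psi_{i\eta}(X_k)$ whose common mean is exactly $\beta_{i\eta} = \mathbb E\psi_{i\eta}(X_1) = \langle f,\psi_{i\eta}\rangle$. By Jensen's inequality, $\mathbb E|\hat\beta_{i\eta}-\beta_{i\eta}| \le (\operatorname{Var}(\hat\beta_{i\eta}))^{1/2} = n^{-1/2}(\operatorname{Var}\psi_{i\eta}(X_1))^{1/2} \le n^{-1/2}(\mathbb E\psi_{i\eta}(X_1)^2)^{1/2}$. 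Then $\mathbb E\psi_{i\eta}(X_1)^2 = \int_{S^{d-1}}\psi_{i\eta}(y)^2 f(y)\,dy \le \|f\|_\infty \int_{S^{d-1}}\psi_{i\eta}(y)^2\,dy = \|f\|_\infty$, using that the needlets $\psi_{i\eta}$ are $L^2$-normalised (equivalently, that $\{\psi_{i\eta}\}$ is a tight frame with the normalisation coming from the quadrature weights). This gives the first inequality.

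Next, for $\mathbb E|\hat f_j^L(x) - A_j(f)(x)|$, I would write the difference as $\sum_{i=0}^{j-1}\sum_{\eta\in\mathcal H_i}(\hat\beta_{i\eta}-\beta_{i\eta})\psi_{i\eta}(x)$. Interchanging the finite sum with the average over the sample, this is $\frac1n\sum_{k=1}^n\big(g_j(X_k)-\mathbb E g_j(X_1)\big)$ where $g_j(\cdot) = \sum_{i=0}^{j-1}\sum_\eta\psi_{i\eta}(\cdot)\psi_{i\eta}(x)$ is (up to the constant term) the projection kernel $A_j(\cdot,x)$. Again by Jensen, $\mathbb E|\hat f_j^L(x)-A_j(f)(x)| \le n^{-1/2}(\operatorname{Var} g_j(X_1))^{1/2} \le n^{-1/2}(\|f\|_\infty\int g_j(y)^2\,dy)^{1/2}$, and $\int A_j(y,x)^2\,dy = A_j(x,x) \le c\,2^{j(d-1)}$ by the reproducing-kernel property and the localisation bound~(\ref{aimpt}) evaluated at $y=x$ (or directly from $A_j(x,x)=\sum_k a(k/2^j)Z^k(x,x)$ and $Z^k(x,x)\asymp k^{d-2}$). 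Absorbing the constant term into the constant $C_5$ finishes this part.

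For the third claim, the concentration inequality, the natural tool is Bernstein's inequality applied to the centred i.i.d.\ variables $Y_k = \psi_{i\eta}(X_k)-\beta_{i\eta}$, for which $n(\hat\beta_{i\eta}-\beta_{i\eta}) = \sum_k Y_k$. I need an almost-sure bound $\|Y_k\|_\infty \le b$ and a variance bound $\operatorname{Var} Y_k \le \sigma^2$. From Lemma~\ref{salamander} (or directly from~(\ref{psiimpt}) with a single $\eta$) one gets $\|\psi_{i\eta}\|_\infty \le c\,2^{i(d-1)/2}$, hence $b \le c\,2^{i(d-1)/2} \le c\sqrt{n/\log n}$ under the hypothesis $2^{i(d-1)}\le n/\log n$; and $\sigma^2 \le \mathbb E\psi_{i\eta}(X_1)^2 \le \|f\|_\infty$ as above. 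Plugging $x = \kappa\sqrt{n\log n}$ into Bernstein's bound $\mathbb P(|\sum Y_k|>x)\le 2\exp\!\big(-\tfrac{x^2}{2n\sigma^2 + \tfrac23 bx}\big)$ and using that $bx \le c\sqrt{n/\log n}\cdot\kappa\sqrt{n\log n} = c\kappa n$, one sees the exponent is at least $\tfrac{\kappa^2 n\log n}{2\|f\|_\infty\sqrt{\omega_{d-1}}\cdot n\,+\,c\kappa n}$; choosing $\kappa \ge \kappa_{(v)} = \max\!\big(14v/(3\sqrt{\omega_{d-1}}),\,\|f\|_\infty\sqrt{\omega_{d-1}}\big)$ makes this exponent at least $v\log n$, yielding the bound $2n^{-v}$. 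The main obstacle is the bookkeeping of the constants: one must track the precise constant in $\|\psi_{i\eta}\|_\infty$ coming from the quadrature weights $\lambda_\eta \asymp 2^{-i(d-1)}$ and in $\mathbb E\psi_{i\eta}^2$, and then verify that the two terms in the Bernstein denominator are balanced exactly so that the stated $\kappa_{(v)}$ (with its specific constant $14/3$) does the job; everything else is routine.
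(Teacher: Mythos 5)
Your proposal is correct, and for the first and third claims it follows essentially the same route as the paper: Jensen plus the variance bound $\mathbb E\psi_{i\eta}(X_1)^2\le \|f\|_\infty\|\psi_{i\eta}\|_2^2$, and Bernstein's inequality applied to $Y_k=\psi_{i\eta}(X_k)-\beta_{i\eta}$ with $\|Y_k\|_\infty\lesssim 2^{i(d-1)/2}\le\sqrt{n/\log n}$; the constant bookkeeping you flag is exactly what the paper does (it uses $\|\psi_{i\eta}\|_\infty\le 2\omega_{d-1}^{-1/2}2^{i(d-1)/2}$ from Lemma~\ref{quad}, so the Bernstein denominator becomes $\kappa\omega_{d-1}^{-1/2}n(2+8/3)=\tfrac{14}{3}\kappa\omega_{d-1}^{-1/2}n$, whence the $14/3$). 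For the second claim you take a genuinely different route: the paper simply pushes the expectation through the triangle inequality, $\mathbb E|\hat f_j^L(x)-A_j(f)(x)|\le\sum_{i<j}\sum_\eta\mathbb E|\hat\beta_{i\eta}-\beta_{i\eta}|\,|\psi_{i\eta}(x)|$, and then invokes the first claim together with Lemma~\ref{salamander} and a geometric sum over $i$; you instead recognise the difference as a single centred empirical average of the kernel $y\mapsto A_j(x,y)$ and bound its variance by $\|f\|_\infty\int A_j(x,y)^2\,dy\lesssim\|f\|_\infty 2^{j(d-1)}$. Both give the stated rate; your version avoids Lemma~\ref{salamander} at the cost of a kernel computation, while the paper's version reuses machinery it needs elsewhere anyway. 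Two small imprecisions to fix: the needlets are not exactly $L^2$-normalised --- Lemma~\ref{quad} only gives $\|\psi_{i\eta}\|_2\le 1$, which is the direction you need, so say ``$\le$'' rather than ``$=$''; and $A_j$ is not an orthogonal projection (the cutoff $a$ is not an indicator), so $\int A_j(x,y)^2\,dy=\sum_k a(k/2^j)^2Z^k(x,x)$ rather than $A_j(x,x)$ --- but since $0\le a\le 1$ you still get $\int A_j(x,y)^2\,dy\le A_j(x,x)\lesssim 2^{j(d-1)}$, so the argument survives.
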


Using this result together with Proposition \ref{bias} it is easy to see that $2^J \simeq n^{\frac{1}{2t+d-1}}$ balances the bias and variance terms, and that the resulting local error of estimation satisfies $$\mathbb{E}|\hat{f}_J^L(x) - f(x)| =O(n^{-t/(2t+d-1)})~~~\text{if}~~~ f \in C_{M, \delta}^t(x).$$ This corresponds to the local minimax rate of estimation at $x \in S^{d-1}$. This needlet estimator, although minimax optimal, requires the knowledge of $t$, which is typically not available. Circumventing this knowledge is the subject of this paper.

Following Baldi et. al in \cite{Baldi}, we define the hard thresholded needlet estimator as
\begin{equation}
\label{gotcha}
\hat{f}_{J,\kappa}^{HT} = \frac{1}{\omega_{d-1}} + \sum_{i=0}^{J-1} \sum_{\eta \in \mathcal{H}_i} \hat{\beta}_{i\eta} \psi_{i\eta} 1_{\hat{\beta}_{i\eta} \geq \kappa \sqrt{\frac{\log n}{n}}}
\end{equation}

A first main result is to show that the thresholded needlet estimator is locally minimax optimal within log-factors. Note that the only unknown quantity required for the construction of the thresholded estimator is $||f||_\infty$, which can be replaced by $||\hat{f}_J||_\infty$ in practice, see for instance \cite{ugh}. 

\begin{theorem}
\label{crazycheese}
Let $n \geq 2$ and $X = \{X_1,\,X_2,\ldots,\,X_n\}$ be an independent and identically distributed sample of $n$ values from some probability density function $f: S^{d-1} \rightarrow \mathbb{R}, f \in C^t_{M,\delta}(x), x \in S^{d-1}$. Let $\hat{f}_{J,\kappa}^{HT}$ be defined as in equation~\ref{gotcha}. If $2^{J(d-1)} \simeq \frac{n}{\log n}$ and $\kappa = 2\max(14/3\sqrt{\omega_{d-1}},||f||_\infty \sqrt{\omega_{d-1}})$, then there exists $C = C(M, \psi, \delta, d, t, ||f||_\infty)$ such that:
\begin{equation*}
\sup_{f \in C^t_{M,\delta}(x)} \mathbb{E} |\hat{f}_{J,\kappa}^{HT}(x) - f(x)| \leq  C \left(\frac{n}{\log n}\right)^{-\frac{t}{2t+d-1}}
\end{equation*}
\end{theorem}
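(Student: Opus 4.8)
The plan is to decompose the error of the thresholded estimator into a bias-type piece coming from the high-frequency levels $i \geq J$ that are discarded entirely, and a "thresholding" piece coming from the levels $0 \leq i < J$ where each coefficient is either kept or killed. For the first piece, write $f(x) - A_J(f)(x)$ and invoke the first bound of Proposition~\ref{bias}: since $f \in C^t_{M,\delta}(x)$ we get $|A_J(f)(x) - f(x)| \leq C_2 2^{-Jt}$, and with $2^{J(d-1)} \simeq n/\log n$ this is exactly of order $(n/\log n)^{-t/(2t+d-1)}$, which is the target rate. So everything hinges on controlling
\[
\mathbb{E}\left| \hat{f}_{J,\kappa}^{HT}(x) - A_J(f)(x) \right| \leq \sum_{i=0}^{J-1} \sum_{\eta \in \mathcal{H}_i} \mathbb{E}\left| \hat{\beta}_{i\eta} 1_{\hat{\beta}_{i\eta} \geq \kappa\sqrt{\log n / n}} - \beta_{i\eta} \right| \, |\psi_{i\eta}(x)|.
\]

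For each term I would split into the four standard thresholding events, following Baldi et al.\ and the classical wavelet thresholding argument, but carried out \emph{pointwise at $x$} using the localisation bound~\eqref{psiimpt}. Let $\tau_n = \kappa\sqrt{\log n / n}$. The four cases are: (i) $\{\hat{\beta}_{i\eta} \geq \tau_n,\ |\beta_{i\eta}| \geq \tau_n/2\}$ — here we bound $|\hat{\beta}_{i\eta} - \beta_{i\eta}|$ in $L^1$ by Proposition~\ref{variance} ($\leq \sqrt{\|f\|_\infty/n}$) and observe that this case only occurs when the true coefficient is large; (ii) $\{\hat{\beta}_{i\eta} \geq \tau_n,\ |\beta_{i\eta}| < \tau_n/2\}$ — a "large deviation" event; on it $|\hat{\beta}_{i\eta} - \beta_{i\eta}| > \tau_n/2$, so the concentration inequality of Proposition~\ref{variance} (with $v$ chosen large, using that the threshold constant $\kappa$ is twice $\kappa_{(1)}$, hence effectively pays for $v$ as large as we like after a Cauchy–Schwarz step to handle the unbounded $\hat\beta_{i\eta}$) gives a factor $n^{-v}$; (iii) $\{\hat{\beta}_{i\eta} < \tau_n,\ |\beta_{i\eta}| > 2\tau_n\}$ — symmetric large-deviation event, handled the same way; (iv) $\{\hat{\beta}_{i\eta} < \tau_n,\ |\beta_{i\eta}| \leq 2\tau_n\}$ — here the estimator contributes $0$ and the error is just $|\beta_{i\eta}| \leq 2\tau_n$, the "small true coefficient" case. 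The negligible cases (ii),(iii) are summed crudely using $\sum_\eta |\psi_{i\eta}(x)| \leq C_1 2^{i(d-1)/2}$ from Lemma~\ref{salamander} and $2^{i(d-1)} \leq n/\log n$, so the whole high-probability-complement contribution is $O(n^{-v+\text{const}})$, negligible.

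The work is in cases (i) and (iv), and here the third bound of Proposition~\ref{bias} is essential. Fix a level $i$ and set $2^{i_0} \simeq (n/\log n)^{1/(2t+d-1)}$, the "ideal" resolution. For levels $i \leq i_0$: in case (i) each kept coefficient costs $\sqrt{\|f\|_\infty/n}\,|\psi_{i\eta}(x)|$, and summing over $\eta$ via Lemma~\ref{salamander} gives $\lesssim 2^{i(d-1)/2}\sqrt{1/n}$, then summing the geometric series up to $i_0$ gives $\lesssim 2^{i_0(d-1)/2} n^{-1/2} \asymp (n/\log n)^{-t/(2t+d-1)}$ (up to the $\log$ absorbed by comparing $n^{-1/2}$ and $(n/\log n)^{-1/2}$) — this is the variance side of the usual bias-variance balance. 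For levels $i_0 < i < J$: here I use the decay bound. Split $\sum_\eta$ into $\eta$ with $d(x,\eta) \leq K 2^{-i}$ and the rest. For the "near" $\eta$ (of which there are $O(1)$ by the separation $d(x_i,x_j) > c2^{-i}$), Proposition~\ref{bias} gives $|\beta_{i\eta}| \leq C_4 2^{-i(2t+d-1)/2}$; multiplying by $|\psi_{i\eta}(x)| \lesssim 2^{i(d-1)/2}$ and summing the geometric series over $i > i_0$ gives $\lesssim 2^{-i_0 t} \asymp (n/\log n)^{-t/(2t+d-1)}$. For the "far" $\eta$, I instead use the second bound of Proposition~\ref{bias}, $\sum_\eta |\beta_{i\eta}\psi_{i\eta}(x)| \leq C_3 2^{-it}$, directly — this already has the right per-level decay and sums to $O(2^{-i_0 t})$ over $i > i_0$. (Alternatively one can just use the $C_3 2^{-it}$ bound for \emph{all} $\eta$ at levels $i > i_0$ and skip the near/far split; the $C_4$ bound is what one needs if instead arguing through case (i) at high levels, where the number of kept coefficients must be controlled.) In every case the contribution is $O((n/\log n)^{-t/(2t+d-1)})$.

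The main obstacle I anticipate is bookkeeping the constant $\kappa$ and the exponent $v$ so that the large-deviation cases (ii),(iii) are genuinely negligible: because $\hat{\beta}_{i\eta}$ is unbounded one cannot directly write $\mathbb{E}[|\hat\beta_{i\eta}|1_{A}] \leq n^{-v}$; one applies Cauchy–Schwarz, $\mathbb{E}[|\hat\beta_{i\eta} - \beta_{i\eta}|1_A] \leq (\mathbb{E}|\hat\beta_{i\eta}-\beta_{i\eta}|^2)^{1/2}\mathbb{P}(A)^{1/2}$, bounds the second moment by $\|f\|_\infty/n \cdot \|\psi_{i\eta}\|_\infty^2 \lesssim 2^{i(d-1)}/n \leq 1/\log n$, and then needs $\mathbb{P}(A)^{1/2} \lesssim n^{-v/2}$ with $v$ large enough that, after multiplying by $\sum_\eta|\psi_{i\eta}(x)| \lesssim 2^{i(d-1)/2} \leq (n/\log n)^{1/2}$ and summing over $i \leq J \asymp \log n$, the total is $o((n/\log n)^{-t/(2t+d-1)})$; choosing $v > (2t+d-1)^{-1} + 1$, say, suffices, and the factor-$2$ in $\kappa$ is precisely what lets Proposition~\ref{variance} deliver this $v$. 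Everything else is the standard four-region thresholding estimate, localised by Lemma~\ref{salamander} and Propositions~\ref{bias}–\ref{variance}.
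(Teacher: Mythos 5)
Your proposal is correct and follows essentially the same route as the paper: the same oracle level $2^{i_0}\simeq (n/\log n)^{1/(2t+d-1)}$, the same treatment of the high levels via the four-region thresholding split (true-coefficient bound from Proposition~\ref{bias}, the $L^1$ deviation bound and the Bernstein/Cauchy--Schwarz step from Proposition~\ref{variance}), and Lemma~\ref{salamander} to sum over $\eta$. The only difference is organisational: the paper anchors its decomposition on the linear estimator $\hat{f}^L_{J_1}$ at the oracle level, so that for levels $i<J_1$ it only needs the crude bound $|\hat{\beta}_{i\eta}|1_{|\hat{\beta}_{i\eta}|<\kappa\sqrt{\log n/n}}\leq \kappa\sqrt{\log n/n}$ rather than your full four-case analysis, whereas you anchor the bias at level $J$ and run the case analysis at every level; both yield the same bounds.
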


We now aim to utilise the local adaptation property of the thresholded needlet estimator to construct confidence intervals. This means that given the sample $X_1, X_2,\ldots, X_n$ we want to choose a data-driven $\hat{\sigma}_{J,\alpha}(x)$ such that:
\begin{enumerate}
\item The confidence interval $[\hat{f}_{J,\kappa}^{HT}(x) - 1.01\hat{\sigma}_{J,\alpha}(x), \hat{f}_{J,\tau}^{HT}(x) + 1.01\hat{\sigma}_{J,\alpha}(x)]$ is asymptotically honest with level $\alpha$ for every $x \in S^{d-1}$:
\begin{equation*}
\liminf_n \inf_{t > 0} \inf_{f \in C^t_{M,\delta}(x)} \mathbb{P}_{f^n} \big(f(x) \in [\hat{f}_{J,\tau}^{HT}(x) - 1.01\hat{\sigma}_{J,\alpha}(x), \hat{f}_{J,\tau}^{HT}(x) + 1.01\hat{\sigma}_{J,\alpha}(x)]\big) \geq 1-\alpha
\end{equation*}
\item The expected size of the confidence interval shrinks at the right rate in $n$, up to $\log n$ terms. 
\end{enumerate}

However, this will be a fruitless task, as the results of Low in \cite{Low} indicate that no honest confidence interval C(x) can adapt to the local smoothness of $f$. This is due to pathological functions which masquerade as having a higher H\"{o}lder exponent than they actually do. With this in mind, the method indicated in \cite{nickl, winner, more} is to choose subsets $\bar{C}^t_{\delta,M}(x) \subset C^t_{\delta,M}(x)$ to cut out these pathologies so that the smoothness parameter $t$ becomes identified. In the case as treated by Kerkyacharian, Nickl and Picard in \cite{nickl2}, we know that if $f \in C^t(S^{d-1})$ were globally t-H\"{o}lder, then $||A_j(f) - f||_\infty \leq D_2 2^{-jt}$, hence it is natural to form $\bar{C}^t \subset C^t$ by admitting only functions $f \in C^t(S^{d-1})$ which satisfy 
\begin{equation*}
D_1 2^{-jt} \leq ||A_j(f) - f||_\infty \leq D_2 2^{-jt},
\end{equation*}
and there it was shown that 'typical' H\"{o}lder functions in $C^t(S^{d-1})$ on the sphere satisfy this condition. There are more compelling results in wavelet theory - Gin\'e and Nickl in \cite{nickl} showed that quasi-every function in $C^t(\mathbb{R})$ satisfies this lower bound condition, whilst cutting out masquerading pathologies. However, this condition is a global property and is thus not suitable for our purposes. We need to find a local analogue of this self-similarity property.

First, recall that at each level $i$, the needlets have a maximum height of order $2^{(d-1)/2}$. Since we are only interested in the neighbourhood of a point $x$, we only consider needlets which have this maximum height. We thus conclude that the centers $\eta$ of these needlets must be close to $x$ by Equation~\ref{psiimpt}: $$\psi_{i\eta}(x) 2^{-i(d-1)/2} \rightarrow 0~~~\text{as}~~~~2^id(x,\eta) \rightarrow 0,$$ hence $d(x, \eta) < A2^{-i}$ for some $A>0$. Now, by Proposition \ref{bias} above $$|\beta_{i \eta}| \le C2^{-i(2t+d-1)/2}$$ holds for those $\eta \in \mathcal H_i$. The following lower bound condition thus becomes natural:
\begin{condition}
\label{lala}
Let $f \in C^t_{M,\delta}(x)$. There exists $A,B >0$ and a sequence $0 < \rho_n < 1$ such that:
$$\max_{\eta \in \mathcal{H}_i} |\beta_{i\eta}| 1_{\psi_{i\eta}(x) \geq A 2^{i(d-1)/2}} \geq B 2^{-i(2t+d-1)/2}$$
where $2^{i(2t+d-1)} \simeq n \rho_n$.
\end{condition}
We note that from Lemma~\ref{lastbutnotleast} below, $1_{\psi_{i\eta}(x) \geq A 2^{i(d-1)/2}} =1$ for some $A$ if $d(x,\eta) < C_8 2^{-i}$, hence, if the quadrature were dense enough, then there are $\eta$ which satisfy this condition. This sequence $\rho_n$ is the needlet-analogue of the condition used in \cite{more}, and it measures the loss of adaptation. We should note that for example, each component of the sum $$f(x) = \frac{1}{\omega_d} + \sum_{j\geq1} a_j Z^{2^j}(\eta,x)$$ is orthogonal to all but one of the needlet levels; hence the $a_j$ can be chosen such that Condition \ref{lala} holds for any sequence $\rho_n$.

Proposition~\ref{bias} suggests to take $\hat{\beta}_{i\eta} \pm \kappa_{(v)} \frac{\log n}{\sqrt{n}}$ as a confidence interval for each non-thresholded needlet coefficient $\beta_{i \eta}$. Our second main result is now the following theorem.

\begin{theorem}
\label{FINISHHIM}Let $n \geq 2$, and let $X = \{X_1,\,X_2,\ldots,\,X_n\}$ be an independent and identically distributed sample from some probability density function $f: S^{d-1} \rightarrow \mathbb{R}$ and let:
\begin{eqnarray*}
\hat{f}_{J,{2\kappa_1}}^{HT}(x) = \frac{1}{\omega_d} + \sum_{i=0}^{J-1} \sum_\eta \hat{\beta}_{i\eta} 1_{|\hat{\beta}_{i\eta}| \geq 2\kappa_1 \sqrt{\frac{\log n}{n}}} \psi_{i\eta}(x) \notag \\
\hat{\sigma}_{J,\alpha}(x) = \sum_{i=0}^{J-1} \sum_\eta \left|\kappa_{w} \gamma_n \sqrt{\frac{\log n}{n}}  1_{|\hat{\beta}_{i\eta}| \geq 2\kappa_1 \sqrt{\frac{\log n}{n}}} \psi_{i\eta}(x)\right|
\end{eqnarray*}

with $\kappa_{(v)} = \max(14v/(3\sqrt{\omega_{d-1}}),||f||_\infty \sqrt{\omega_{d-1}})$, $w$ such that $n^{-w} = \frac{\alpha}{n}$ and $J$ such that $2^{J(d-1)} \simeq \frac{n}{\log n}$. Then if $f \in C^t_{M,\delta}(x)$ satisfies Condition~\ref{lala} such that $\gamma_n (\rho_n \log n)^{(d-1)/(2(2t+d-1))}$ diverges and $\rho_n \log n$ converges to 0, then there exists $C$ dependent on $M,\,\psi,\,\delta,\, d,\, t,\, ||f||_\infty,\,\alpha$ such that:
\begin{eqnarray*}
\mathbb{E} \hat{\sigma}_{J,\alpha}(x) \leq C \Big(\frac{n}{\log n}\Big)^{-\frac{t}{2t+d-1}}  \gamma_n \\
\liminf_n \mathbb{P}_{f^n} \big(f(x) \in [\hat{f}_{J,2\kappa_1}^{HT}(x) - 1.01\hat{\sigma}_{J,\alpha}(x), \hat{f}_{J,2\kappa_1}^{HT}(x) + 1.01\hat{\sigma}_{J,\alpha}(x)]\big) \geq 1-\alpha
\end{eqnarray*}
\end{theorem}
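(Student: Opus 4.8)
The plan is to prove the two displayed bounds separately. For the length bound I would compute $\mathbb{E}\hat{\sigma}_{J,\alpha}(x)$ directly; for coverage I would use the decomposition
\[
\hat{f}_{J,2\kappa_1}^{HT}(x)-f(x)=\big(\hat{f}_{J,2\kappa_1}^{HT}(x)-A_J(f)(x)\big)+\big(A_J(f)(x)-f(x)\big),
\]
bound the second term by $C_2 2^{-Jt}$ via Proposition~\ref{bias}, and show that on a high-probability event the first term is at most $\hat{\sigma}_{J,\alpha}(x)$ up to an asymptotically negligible correction. Throughout write $\tau_n=2\kappa_1\sqrt{\log n/n}$, and note two facts used repeatedly: $w\to1$, so $\kappa_{(w)}\to\kappa_{(1)}$ stays bounded, and $\gamma_n\to\infty$, since $\gamma_n(\rho_n\log n)^{(d-1)/(2(2t+d-1))}$ diverges while $\rho_n\log n\to0$.

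\emph{Length bound.} As the thresholding indicator is the only random object in $\hat{\sigma}_{J,\alpha}(x)$, we have $\mathbb{E}\hat{\sigma}_{J,\alpha}(x)=\kappa_{(w)}\gamma_n\sqrt{\log n/n}\sum_{i=0}^{J-1}\sum_{\eta}|\psi_{i\eta}(x)|\,\mathbb{P}(|\hat{\beta}_{i\eta}|\ge\tau_n)$. Fix $j^{\ast}$ with $2^{j^{\ast}(2t+d-1)}\simeq n/\log n$, so that $2^{-j^{\ast}t}\simeq\sqrt{\log n/n}\,2^{j^{\ast}(d-1)/2}\simeq(n/\log n)^{-t/(2t+d-1)}$ and $j^{\ast}<J$. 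For $i\le j^{\ast}$ bound the probability by $1$ and use Lemma~\ref{salamander}; the geometric sum in $2^{i(d-1)/2}$ contributes $\lesssim(n/\log n)^{-t/(2t+d-1)}$. For $i>j^{\ast}$ split the $\eta$-sum according to whether $|\beta_{i\eta}|\le\kappa_1\sqrt{\log n/n}$. If so, $\{|\hat{\beta}_{i\eta}|\ge\tau_n\}\subseteq\{|\hat{\beta}_{i\eta}-\beta_{i\eta}|>\kappa_1\sqrt{\log n/n}\}$, which has probability $\le2n^{-1}$ by Proposition~\ref{variance}, giving a contribution $\lesssim\sqrt{\log n/n}\,2^{J(d-1)/2}n^{-1}\lesssim n^{-1}$. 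If not, then $|\psi_{i\eta}(x)|=|\beta_{i\eta}\psi_{i\eta}(x)|/|\beta_{i\eta}|<|\beta_{i\eta}\psi_{i\eta}(x)|/(\kappa_1\sqrt{\log n/n})$, and using $\mathbb{P}(\cdot)\le1$ with $\sum_{\eta}|\beta_{i\eta}\psi_{i\eta}(x)|\le C_3 2^{-it}$ from Proposition~\ref{bias} this part is $\lesssim\sum_{i>j^{\ast}}2^{-it}\lesssim 2^{-j^{\ast}t}\simeq(n/\log n)^{-t/(2t+d-1)}$. Multiplying by $\kappa_{(w)}\gamma_n=O(\gamma_n)$ yields $\mathbb{E}\hat{\sigma}_{J,\alpha}(x)\le C(n/\log n)^{-t/(2t+d-1)}\gamma_n$.

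\emph{Coverage.} Let $E_n=\{\max_{0\le i<J,\ \eta\in\mathcal{H}_i}|\hat{\beta}_{i\eta}-\beta_{i\eta}|\le\kappa_{(w)}\sqrt{\log n/n}\}$. By Proposition~\ref{variance} with $v=w$ each coordinate fails with probability $\le2n^{-w}=2\alpha/n$, and since there are $\lesssim\sum_{i<J}2^{i(d-1)}\simeq n/\log n$ coordinates, a union bound gives $\mathbb{P}(E_n^{c})\lesssim\alpha/\log n\to0$. On $E_n$, write $\hat{f}_{J,2\kappa_1}^{HT}(x)-A_J(f)(x)=\sum_{i,\eta}(\hat{\beta}_{i\eta}1_{|\hat{\beta}_{i\eta}|\ge\tau_n}-\beta_{i\eta})\psi_{i\eta}(x)$ (the constant terms coincide) and split the index set. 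Where $|\hat{\beta}_{i\eta}|\ge\tau_n$ the summand equals $(\hat{\beta}_{i\eta}-\beta_{i\eta})\psi_{i\eta}(x)$, of modulus $\le\kappa_{(w)}\gamma_n\sqrt{\log n/n}|\psi_{i\eta}(x)|$ on $E_n$, which is precisely the matching summand of $\hat{\sigma}_{J,\alpha}(x)$; hence these terms sum in modulus to at most $\hat{\sigma}_{J,\alpha}(x)$. Where $|\hat{\beta}_{i\eta}|<\tau_n$ the summand is $-\beta_{i\eta}\psi_{i\eta}(x)$ with $|\beta_{i\eta}|\le(2\kappa_1+\kappa_{(w)})\sqrt{\log n/n}$ on $E_n$; combining $\sum_{\eta}|\beta_{i\eta}\psi_{i\eta}(x)|\le C_3 2^{-it}$ with $|\beta_{i\eta}|\lesssim\sqrt{\log n/n}$ and Lemma~\ref{salamander}, and balancing the two bounds at $j^{\ast}$, this part is $\lesssim(n/\log n)^{-t/(2t+d-1)}$. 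Therefore, on $E_n$,
\[
|\hat{f}_{J,2\kappa_1}^{HT}(x)-f(x)|\le\hat{\sigma}_{J,\alpha}(x)+C'(n/\log n)^{-t/(2t+d-1)}+C_2 2^{-Jt}.
\]

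The hard part is absorbing the last two terms into $0.01\,\hat{\sigma}_{J,\alpha}(x)$ using Condition~\ref{lala}, since $\hat{\sigma}_{J,\alpha}(x)$ is random and its size is governed by which coefficients survive thresholding. Let $i^{\ast}<J$ be the level with $2^{i^{\ast}(2t+d-1)}\simeq n\rho_n$ and $\eta^{\ast}$ a point realising the maximum in Condition~\ref{lala}, so $\psi_{i^{\ast}\eta^{\ast}}(x)\ge A2^{i^{\ast}(d-1)/2}$ and $|\beta_{i^{\ast}\eta^{\ast}}|\ge B2^{-i^{\ast}(2t+d-1)/2}\gtrsim(\rho_n\log n)^{-1/2}\sqrt{\log n/n}$. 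Since $(\rho_n\log n)^{-1/2}\to\infty$, for $n$ large $|\beta_{i^{\ast}\eta^{\ast}}|$ exceeds $\tau_n$ by more than $\kappa_{(w)}\sqrt{\log n/n}$, so on $E_n$ the coordinate $(i^{\ast},\eta^{\ast})$ is retained, and keeping only its contribution,
\[
\hat{\sigma}_{J,\alpha}(x)\ge\kappa_{(w)}\gamma_n A\sqrt{\log n/n}\,2^{i^{\ast}(d-1)/2}\simeq\gamma_n(\rho_n\log n)^{(d-1)/(2(2t+d-1))}(n/\log n)^{-t/(2t+d-1)}
\]
on $E_n$, using $\sqrt{\log n/n}\,2^{i^{\ast}(d-1)/2}\simeq(\rho_n\log n)^{(d-1)/(2(2t+d-1))}(n/\log n)^{-t/(2t+d-1)}$. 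By hypothesis the prefactor $\gamma_n(\rho_n\log n)^{(d-1)/(2(2t+d-1))}$ diverges, so $\hat{\sigma}_{J,\alpha}(x)$ dominates $(n/\log n)^{-t/(2t+d-1)}$, and a fortiori $2^{-Jt}=(n/\log n)^{-t/(d-1)}$, by a factor tending to $\infty$ uniformly on $E_n$. Consequently, for all $n$ past a deterministic index, the previous display gives $|\hat{f}_{J,2\kappa_1}^{HT}(x)-f(x)|\le1.01\,\hat{\sigma}_{J,\alpha}(x)$ on $E_n$, so that $\mathbb{P}_{f^n}\big(f(x)\in[\hat{f}_{J,2\kappa_1}^{HT}(x)-1.01\hat{\sigma}_{J,\alpha}(x),\hat{f}_{J,2\kappa_1}^{HT}(x)+1.01\hat{\sigma}_{J,\alpha}(x)]\big)\ge\mathbb{P}(E_n)\to1\ge1-\alpha$. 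The only genuinely delicate point is this last step — pitting the data-driven random width against the residual bias-type terms — for which Condition~\ref{lala} and the growth assumption on $\gamma_n$ are essential; the remainder is routine needlet bookkeeping built on Propositions~\ref{bias} and \ref{variance} and Lemma~\ref{salamander}.
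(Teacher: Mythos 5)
Your proof is correct and follows essentially the same route as the paper's: the same three-way split (trivial bound below $j^\ast$, Bernstein bound for small true coefficients, Proposition~\ref{bias} for large ones) for the expected length, the same high-probability event $E_n$ with the retained/discarded coefficient decomposition for coverage, and the same use of Condition~\ref{lala} to lower-bound $\hat{\sigma}_{J,\alpha}(x)$ by $\gamma_n(\rho_n\log n)^{(d-1)/(2(2t+d-1))}(n/\log n)^{-t/(2t+d-1)}$ so that the residual bias terms are absorbed into $0.01\,\hat{\sigma}_{J,\alpha}(x)$. The only cosmetic difference is that you phrase the paper's $f=f_1+f_2$ splitting via $A_J(f)$, which is the same decomposition.
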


\begin{remark} Inspection of the proof of Theorem~\ref{FINISHHIM} shows that the confidence interval in Theorem 2 is asymptotically honest over $f \in \bigcup_{t \in [r,R]} \{\{C_{M,\delta}^t(x)\} \bigcap~\{\text{Condition \ref{lala}}(A, B, \rho_n)\}\}$. \end{remark}
\begin{remark} For simplicity, we have chosen $w$ to accommodate all $O(n)$ needlet coefficients. In practice, we suggest that $w$ be chosen such that $n^{-w} = \frac{\alpha}{\#}$ where $\#$ is the number of non-thresholded coefficients, which will decrease the size of the interval without affecting the theoretical results. \end{remark}
\begin{remark} To ensure that the 'large' needlet in Condition~\ref{lala} is not truncated, we require that $|\beta_{i\eta} | \geq B2^{-i(2t+d-1)/2} = \frac{B}{\sqrt{n\rho_n}}$ be larger than $2 \kappa_1 \sqrt{\frac{\log n}{n}}$. Hence, $\log n \rho_n$ has to converge to 0. This is the loss of adaptation in hard thresholding. Nonetheless, if we assume the function is exactly self-similar, we can pick $\rho_n = (\log n)^{-2}$, say, and allows us to pick $\gamma_n = \log n$ gives us a confidence interval whose expected size is minimax-optimal up to $\log n$ terms. \end{remark}
\begin{remark} For practical implementation, the only unknown quantity we require is $||f||_\infty$ This may be replaced by $||\hat{f}_J||_\infty$ in practice, see for instance the proof of Theorem 2 in \cite{ugh}. However it must be stressed that although the constants given here are sufficient for the theoretical results, however practical implementation may require better constants, and this is beyond the scope of this paper. \end{remark}  

\section{Proofs}
Our main focuses here are Theorems~\ref{crazycheese} and~\ref{FINISHHIM} so we will start by proving these. We will assume Lemma~\ref{salamander}, Propositions~\ref{bias} and~\ref{variance} in these sections. We will then prove these three statements. Finally, we will give proofs of the lower bounds for $\psi$ and the size of the confidence intervals.

\subsection{Proof of Theorem~\ref{crazycheese}}

\begin{proof}[Proof of Theorem~\ref{crazycheese}]
Let $2^{J_1} \simeq \left(\frac{n}{\log n}\right)^{\frac{1}{2t+d-1}}$ in such a way that $J \geq J_1$. We have that:
\begin{align*}
\mathbb{E}|\hat{f}_{J,\tau}^{HT}(x) - f(x)| &\leq \mathbb{E}|\hat{f}_{J_1}^L(x) - f(x)| \\ 
&+ \mathbb{E} \left|\sum_{i=0}^{J_1-1} \sum_\eta \hat{\beta}_{i\eta} 1_{|\hat{\beta}_{i\eta}| < \kappa \sqrt{\frac{\log n}{n}}} \psi_{i\eta}(x)\right| \\ 
& + \mathbb{E} \left|\sum_{i=J_1}^{J-1} \sum_\eta \hat{\beta}_{i\eta}  1_{|\hat{\beta}_{i\eta}| \geq \kappa \sqrt{\frac{\log n}{n}}} \psi_{i\eta}(x)\right|.\\
\end{align*}

By Propositions~\ref{bias} and~\ref{variance}, we have:
\begin{align*}
\mathbb{E}|\hat{f}_{J_1}^L(x) - f(x)| &\leq |A_{J_1}(f)(x) - f(x)| + \mathbb{E}|\hat{f}_{J_1}^L(x) - A_{J_1}(f)(x)| \\
&\leq C_2 2^{-J_1t} + C_5 \sqrt{||f||_\infty}2^{J_1(d-1)/2} n^{-1/2} \\
&\leq (C_2 + C_5\sqrt{||f||_\infty}) \left(\frac{n}{\log n}\right)^{-\frac{t}{2t+d-1}}
\end{align*}

Now we bound the second term using Lemma~\ref{salamander}:

\begin{align*}
\mathbb{E} \left|\sum_{i=0}^{J_1-1} \sum_\eta \hat{\beta}_{i\eta} 1_{|\hat{\beta}_{i\eta}| < \kappa \sqrt{\frac{\log n}{n}}} \psi_{i\eta}(x)\right| &\leq \sum_{i=0}^{J_1-1} \sum_\eta \left|\kappa \sqrt{\frac{\log n}{n}} \psi_{i\eta}(x)\right| \\
&\leq \sum_{i=0}^{J_1-1} C_1 \sqrt{\frac{\log n}{n}} 2^{i(d-1)/2} \\
&\leq C_1 \sqrt{\frac{\log n}{n}} \left(\frac{n}{\log n}\right)^{-\frac{t}{2t+d-1}}.
\end{align*}

Now, we deal with the other term. We have:
\begin{align*}
&\mathbb{E} \left|\sum_{i=J_1}^{J-1} \sum_\eta \hat{\beta}_{i\eta}  1_{|\hat{\beta}_{i\eta}| \geq \kappa \sqrt{\frac{\log n}{n}}} \psi_{i\eta}(x)\right| \\ 
&\leq \mathbb{E} \sum_{i=J_1}^{J-1} \sum_\eta \left|\beta_{i\eta}  1_{|\hat{\beta}_{i\eta}| \geq \kappa \sqrt{\frac{\log n}{n}}} \psi_{i\eta}(x)\right| \\ 
&+ \mathbb{E} \sum_{i=J_1}^{J-1} \sum_\eta \left|(\hat{\beta}_{i\eta}-{\beta}_{i\eta})  1_{|\beta_{i\eta}| \geq \frac{\kappa}{2} \sqrt{\frac{\log n}{n}}} \psi_{i\eta}(x)\right| \\ 
&+ \mathbb{E} \sum_{i=J_1}^{J-1} \sum_\eta \left|(\hat{\beta}_{i\eta}-{\beta}_{i\eta})  1_{|\hat{\beta}_{i\eta}| \geq \kappa \sqrt{\frac{\log n}{n}}, |\beta_{i\eta}| \leq \frac{\kappa}{2} \sqrt{\frac{\log n}{n}}} \psi_{i\eta}(x)\right| 
\end{align*} 

First, we use Proposition~\ref{bias}.
\begin{align*}
\mathbb{E} \sum_{i=J_1}^{J-1} \sum_\eta \left|\beta_{i\eta}  1_{|\hat{\beta}_{i\eta}| \geq \kappa \sqrt{\frac{\log n}{n}}} \psi_{i\eta}(x)\right| 
&\leq \sum_{i=J_1}^{J-1} \sum_\eta \left|\beta_{i\eta} \psi_{i\eta}(x)\right| \\
&\leq \sum_{i=J_1}^{J-1} C_3 2^{-it} \\ 
&\simeq C_3 \left(\frac{n}{\log n}\right)^{-\frac{t}{2t+d-1}}.
\end{align*}

We then obtain the bound for the second term using Propositions~\ref{bias} and~\ref{variance}:
\begin{align*}
\mathbb{E} \sum_{i=J}^{J-1} \sum_\eta \left|(\hat{\beta}_{i\eta}-{\beta}_{i\eta})  1_{|\beta_{i\eta}| \geq \frac{\kappa}{2} \sqrt{\frac{\log n}{n}}} \psi_{i\eta}(x)\right| 
&\leq \sum_{i=J_1}^{J-1} \sum_\eta \frac{|\beta_{i\eta} \psi_{i\eta}(x)|}{\frac{\kappa}{2} \sqrt{\frac{\log n}{n}}} \mathbb{E} |(\hat{\beta}_{i\eta} - \beta_{i\eta})| \\
&\leq \sum_{i=J_1}^{J-1} C_3 2^{-it} \frac{2}{\kappa} \frac{\sqrt{n}}{\sqrt{\log n}} \frac{\sqrt{||f||_\infty}}{\sqrt{n}} \\
&\leq C_3 (||f||_\infty \omega_{d-1})^{-1/2} n^{-\frac{t}{2t+d-1}} \left(\log n\right)^{-\frac{(d-1)/2}{2t+d-1}}
\end{align*}

Using the Cauchy-Schwarz inequality and Lemma \ref{salamander} we deduce:
\begin{align*}&\mathbb{E} \sum_{i=J_1}^{J-1} \sum_\eta \left|(\hat{\beta}_{i\eta}-{\beta}_{i\eta})  1_{|\hat{\beta}_{i\eta}| \geq \kappa \sqrt{\frac{\log n}{n}}, |\beta_{i\eta}| \leq \frac{\kappa}{2} \sqrt{\frac{\log n}{n}}} \psi_{i\eta}(x)\right|  \\ 
&\leq \mathbb{E} \sum_{i=J_1}^{J-1} \sum_\eta \left|(\hat{\beta}_{i\eta}  - \beta_{i\eta}) \psi_{i\eta}(x) 1_{|\hat{\beta}_{i\eta} - \beta_{i\eta}| > \frac{\kappa}{2} \sqrt{\frac{\log n}{n}}} \right| \\
&\leq \sum_{i=J_1}^{J-1} \sum_\eta |\psi_{i\eta}(x)| \sqrt{\mathbb{E} \left|\hat{\beta}_{i\eta}  - \beta_{i\eta}\right|^2 \mathbb{P}(|\hat{\beta}_{i\eta} - \beta_{i\eta}| > \frac{\kappa}{2} \sqrt{\frac{\log n}{n}})} \\
&\leq \sum_{i=J_1}^{J-1} \sum_\eta |\psi_{i\eta}(x)|  \sqrt{\frac{2}{n} \frac{||f||_\infty}{n}} \\
&\leq \sum_{i=J_1}^{J-1} C_1 \sqrt{2 ||f||\infty} 2^{i(d-1)/2} n^{-1}
\simeq C_1 \sqrt{2 ||f||_\infty} \frac{1}{\sqrt{n \log n}}.
\end{align*}
So by combining these inequalities, we have the result.
\end{proof}

\subsection{Proof of Theorem~\ref{FINISHHIM}}
\begin{proof}[Proof of Theorem~\ref{FINISHHIM}]
We start off by showing that $\hat{\sigma}_{J,\alpha}$ is of the right size. We first note that $w$ is bounded above by $1 + \log_2 \alpha$, and hence $\kappa_w$ is also bounded above, so we will omit it from what follows. We will also drop the $\gamma_n$ which appears on both sides. We split sum into three bits using the triangle inequality:
\begin{align*}
\mathbb{E} \sum_{i=0}^{J-1} \sum_\eta \left|\sqrt{\frac{\log n}{n}}  1_{|\hat{\beta}_{i\eta}| \geq \kappa \sqrt{\frac{\log n}{n}}} \psi_{i\eta}(x)\right| \leq &\sum_{i=0}^{J_1-1} \sum_\eta \left|\sqrt{\frac{\log n}{n}} \psi_{i\eta}(x)\right| \\ 
&+ \mathbb{E} \sum_{i=J_1}^{J-1} \sum_\eta \left|\sqrt{\frac{\log n}{n}}  1_{|\hat{\beta}_{i\eta}| \geq 2\kappa_1 \sqrt{\frac{\log n}{n}}, |\beta_{i\eta}| \leq \kappa_1 \sqrt{\frac{\log n}{n}}} \psi_{i\eta}(x)\right| \\ 
&+ \sum_{i=J_1}^{J-1} \sum_\eta \left|\sqrt{\frac{\log n}{n}}  1_{|\beta_{i\eta}| \geq \kappa_1 \sqrt{\frac{\log n}{n}}} \psi_{i\eta}(x)\right|
\end{align*}
where $2^{J_1} = \Big(\frac{n}{\log n}\Big)^{\frac{1}{2t+d-1}}$. We call the terms $D+E+F$.

We bound the first term using Lemma~\ref{salamander}:
\begin{equation*}
D \leq \sum_{i=0}^{J_1-1} C_1 \frac{2^{i(d-1)/2} \sqrt{\log n}}{\sqrt{n}} \leq C_1 \frac{2^{J_1(d-1)/2} \sqrt{\log n}}{\sqrt{n}} \simeq C_1 \Big(\frac{n}{\log n}\Big)^{-\frac{t}{2t+d-1}}.
\end{equation*}

We bound $E$ using Proposition~\ref{variance} and Lemma~\ref{salamander}:
\begin{align*}
E &\leq \sum_{i=J_1}^{J-1} \sqrt{\frac{\log n}{n}} \sum_\eta |\psi_{i\eta}(x)| \mathbb{P}\left(|\hat{\beta}_{i\eta} - \beta_{i\eta}| \geq \kappa_1 \sqrt{\frac{\log n}{n}}\right) \\
&\leq \sum_{i=J_1}^{J-1} \sqrt{\frac{\log n}{n}} C_1 2^{i(d-1)/2} 2n^{-1} \\ 
&\leq 2C_1 n^{-1} \Big(\frac{n}{\log n}\Big)^{-\frac{t}{2t+d-1}}.
\end{align*}

We bound $F$ using Proposition~\ref{bias}:
\begin{align*}
F \leq \sum_{i=J_1}^{J-1} \sum_\eta \left|\sqrt{\frac{\log n}{n}}  \frac{\beta_{i\eta}}{\kappa_1 \sqrt{\frac{\log n}{n}}} \psi_{i\eta}(x)\right| &\leq \frac{3 \sqrt{\omega_{d-1}}} {14} \sum_{i=J_1}^{J-1} \sum_\eta \left|\beta_{i\eta} \psi_{i\eta}(x)\right| \\ 
&\leq \frac{3 C_3 \sqrt{\omega_{d-1}}} {14} \Big(\frac{n}{\log n}\Big)^{-\frac{t}{2t+d-1}}.
\end{align*}
Summing these inequalities, we have the result. 

Now we show asymptotic coverage. We aim to show that for all $f \in C^t_{\delta,M}(x)$ satisfying Condition~\ref{lala}:
\begin{equation*}
\liminf_n \mathbb{P}_{f^n} \big(|f(x) - \hat{f}_{J,2\kappa_1}^{HT}(x)| \leq 1.01\hat{\sigma}_{J,\alpha}(x)\big) \geq 1-\alpha
\end{equation*}

Define the event $E_n$ as follows:
\begin{equation*}
E_n=\left\{\forall i\leq J-1,\,\eta,\, \left|\beta_{i\eta} - \hat{\beta}_{i\eta}\right| \leq \left|\kappa_{w}\sqrt{\frac{\log n}{n}} \right|\right\}
\end{equation*}

We also split $f$ into two terms $f_1+f_2$:
\begin{align*}
&f = \left(\frac{1}{\omega_{d-1}} + \sum_{i=0}^{J-1} \sum_{\eta \in \mathcal{H}_i} 1_{|\hat{\beta}_{i\eta}| > 2\kappa_1 \sqrt{\frac{\log n}{n}}} \beta_{i\eta} \psi_{i\eta}(x)\right) \\ &+ \left(\sum_{i=0}^{J-1} \sum_{\eta \in \mathcal{H}_i} 1_{|\hat{\beta}_{i\eta}| < 2\kappa_1 \sqrt{\frac{\log n}{n}}} \beta_{i\eta} \psi_{i\eta}(x) + \sum_{i=J}^{\infty} \sum_{\eta \in \mathcal{H}_i}  \beta_{i\eta} \psi_{i\eta}(x)\right)
\end{align*}

Then the following are true:
\begin{enumerate}
\item \label{persimmon} There exists $N$ such that for all $n > N$, $\mathbb{P}(E_n) \geq 1-\alpha$
\item \label{alohomora} Under $E_n$, $|f_1 - \hat{f}_{J,2\kappa_1}^{HT}(x)| \leq \hat{\sigma}_{J,\alpha} (x)$
\item \label{wingardium} Under $E_n$, there exists a constant $C$ dependent on $M,\,\psi,\,\delta,\,d,\,t,\,||f||_\infty,\,\alpha$ such that $|f_2| \leq  C\left(\frac{n}{\log n}\right)^{-\frac{t}{2t+d-1}}$
\item \label{leviosa} Under $E_n$, there exists constants $N$ and $C'$ dependent on $||f||_\infty$ and $A,\,B,\,\rho_n$ from Condition~\ref{lala} such that for all $n > N$, $\hat{\sigma}_{J,\alpha} (x) \geq C' \gamma_n \sqrt{\rho_n (\log n)^{(d-1)/(2t+d-1)}} \left(\frac{n}{\log n}\right)^{-\frac{t}{2t+d-1}}$
\end{enumerate}

\begin{proof}[Proof of part~\ref{persimmon}]
By Proposition~\ref{variance}, we know that 
\begin{equation*}
\mathbb{P}\left(\left|\beta_{i\eta} - \hat{\beta}_{i\eta}\right| \geq \left|\kappa_{w} \sqrt{\frac{\log n}{n}} \right|\right) \leq \alpha n^{-1}
\end{equation*}
Now, $|\mathcal{H}_i| = C 2^{i(d-1)}$, so $\sum_{i=0}^{J-1} |\mathcal{H}_i| \leq C 2^{J(d-1)}$, and thus we have:
\begin{equation*}
\mathbb{P}\left(\forall i\leq J-1,\,\eta_i,\, \left|\beta_{i\eta} - \hat{\beta}_{i\eta}\right| \leq \left|\kappa_{w} \sqrt{\frac{\log n}{n}} \right|\right) \leq C 2^{J(d-1)} \alpha n^{-1} \simeq C \frac{\alpha}{\log n}
\end{equation*}
Taking $\log N = C$ yields the result. 
\end{proof}

\begin{proof}[Proof of part~\ref{alohomora}]
We can rewrite $|f_1 - \hat{f}_{J,{2\kappa_1}}^{HT}(x)|$ as
\begin{equation*}
\left|f_1 - \hat{f}_{J,{2\kappa_1}}^{HT}(x)\right| \le \sum_{i=0}^{J} \sum_{\eta \in \mathcal{H}_i} 1_{|\hat{\beta}_{i\eta}| > 2\kappa_1 \sqrt{\frac{\log n}{n}}} \left|\beta_{i\eta} - \hat{\beta}_{i\eta}\right| \left|\psi_{i\eta}(x)\right|
\end{equation*}
This is clearly smaller than $\hat{\sigma}_{J,\alpha}(x) = \sum_{i=0}^{J-1} \sum_{\eta \in \mathcal{H}_i} 1_{|\hat{\beta}_{i\eta}| > 2\kappa_1 \sqrt{\frac{\log n}{n}}} \left|\kappa_{w} \frac{\log n}{\sqrt{n}} \psi_{i\eta}(x)\right|$ on the event $E_n$
\end{proof}

\begin{proof}[Proof of part~\ref{wingardium}] Let $2^{J_1} \simeq \left(\frac{n}{\log n}\right)^{\frac{1}{2t+d-1}}$ such that $J_1 \leq J$. On $E_n$, we have:
\begin{align*}
|f_2| &\leq \left|\sum_{i=0}^{J_1-1} \sum_{\eta \in \mathcal{H}_i} 1_{|\beta_{i\eta}| < \left(2\kappa_1 + \kappa_{w}\right)\sqrt{\frac{\log n}{n}}} \beta_{i\eta} \psi_{i\eta}(x)\right| \\ 
&+ \left|\sum_{i=J_1}^{J-1} \sum_{\eta \in \mathcal{H}_i} 1_{|\beta_{i\eta}| < \left(2\kappa_1 + \kappa_{w}\right)\sqrt{\frac{\log n}{n}}} \beta_{i\eta} \psi_{i\eta}(x)\right| \\
&+ \left|\sum_{i=J}^{\infty} \sum_{\eta \in \mathcal{H}_i}  \beta_{i\eta} \psi_{i\eta}(x)\right|
\end{align*}

We bound the first term by Lemma~\ref{salamander}:
\begin{equation*}
\sum_{i=0}^{J_1} \sum_\eta \left(2\kappa_1 + \kappa_{w}\right)\sqrt{\frac{\log n}{n}} |\psi_{i\eta}(x)|  \leq  C_1 \left(2\kappa_1 + \kappa_{w}\right) \left(\frac{n}{\log n}\right)^{-\frac{t}{2t+d-1}}
\end{equation*}

We bound the second term using Proposition~\ref{bias} by:
\begin{equation*}
\sum_{i=J_1 + 1}^J \sum_\eta  |\beta_{i\eta}| |\psi_{i\eta}(x)| \leq C_3 \left(\frac{n}{\log n}\right)^{-\frac{t}{2t+d-1}}
\end{equation*} 

Finally, by Proposition~\ref{bias}, the last term is bounded by $C_2 2^{-Jt} = C_2 n^{-\frac{t}{d-1}}$. Summing the inequalities gives us the result.
\end{proof}

\begin{proof}[Proof of part~\ref{leviosa}]
Let $i$ and $\eta$ be the arguments in Condition~\ref{lala}, that is:
\begin{equation*}
|\beta_{i\eta}| 1_{\psi_{i\eta}(x) \geq A 2^{i(d-1)/2}} \geq B 2^{-i(2t+d-1)/2}
\end{equation*}
Now, from the definition of $i$, $|\beta_{i\eta}| \geq B (n \rho_n)^{-1/2} $. Since $\rho_n \log n$ converges to 0, there exists $N'$ depending only on $B,\, \rho_n,\,||f||_\infty$ such that for all $n > N'$, $B (n \rho_n')^{-1/2} \geq (2 \kappa_1 + \kappa_w)\sqrt{n/\log n}$ (recalling that $\kappa_w$ is bounded above by a fixed constant). Thus:
\begin{align*}
\hat{\sigma}_{J,\alpha} (x) &\geq \left|\kappa_{w} \gamma_n \sqrt{\frac{\log n}{n}}  A 2^{i(d-1)}\right| \\
&\geq \left|A \kappa_{w} \gamma_n (\rho_n \log n)^{(d-1)/(2(2t+d-1))} \left(\frac{n}{\log n}\right)^{- \frac{t}{2t+d-1}}\right|
\end{align*}
which together with $\kappa_w \geq 1$ gives us the result.
\end{proof}
\\
\\
We can choose $N$ large enough such that under $E_n$, $0.01\hat{\sigma}_{J,\alpha} (x) \geq |f_2|.$ Thus, by the triangle inequality, under $E_n$, $0.01\hat{\sigma}_{J,\alpha} (x) \geq |f_1 - \hat{f}_{J,{2\kappa_1}}^{HT}(x)|$, and using the triangle inequality again with part~\ref{alohomora} gives asymptotic coverage. \end{proof}

\subsection{Upper bounds on the $L^2$ and $L^\infty$ norms of $\psi_{i\eta}$}
We will need the following bounds on the $L^2$ and $L^\infty$ norms of $\psi_{i\eta}$ to prove the propositions and lemma.

\begin{lemma}
\label{quad}
For all $\eta \in \mathcal{H}_i$, $||\psi_{i\eta}||_2 \leq 1$ and $||\psi_{i\eta}||_\infty \leq 2\omega_{d-1}^{-1/2}2^{i(d-1)/2}$
\end{lemma}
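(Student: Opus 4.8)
The plan is to establish the two bounds separately, both starting from the definition $\psi_{i\eta} = \sqrt{\lambda_\eta}\,C_i(\cdot,\eta)$ together with the splitting identity $B_i(x,y) = \int_{S^{d-1}} C_i(x,u)C_i(y,u)\,du$ and the quadrature weights $\{\lambda_\eta\}$.

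For the $L^2$ bound, I would first observe that $\|\psi_{i\eta}\|_2^2 = \lambda_\eta \langle C_i(\cdot,\eta), C_i(\cdot,\eta)\rangle = \lambda_\eta \int_{S^{d-1}} C_i(u,\eta)^2\,du = \lambda_\eta\, B_i(\eta,\eta)$, where the last equality uses the splitting identity with $x=y=\eta$. Next I would bound $B_i(\eta,\eta)$: since $B_i(\eta,\eta) = \sum_k b(k/2^i) Z^k(\eta,\eta)$ with $b \le 1$ (because $a$ is decreasing, $0 \le a \le 1$, so $b = a(x/2)-a(x) \in [0,1]$) and $Z^k(\eta,\eta) = \dim H_k(S^{d-1})/\omega_{d-1} \ge 0$, one gets $B_i(\eta,\eta) \le \sum_{2^{i-1}<k<2^{i+1}} Z^k(\eta,\eta)$, which is of order $2^{i(d-1)}$. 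To close the argument I would then invoke the quadrature property: applying $\int f\,dx = \sum_\eta \lambda_\eta f(\eta)$ to the nonnegative polynomial $u \mapsto C_i(u,\eta)^2$ of degree $2^{i+2}$ shows $\sum_{\eta} \lambda_\eta C_i(\eta,\eta')^2 \cdot(\text{stuff})$ — more cleanly, testing the quadrature on $f \equiv 1$ gives $\sum_\eta \lambda_\eta = \omega_{d-1}$, and testing it on the localised kernel shows each $\lambda_\eta$ is of order $2^{-i(d-1)}$; combining $\lambda_\eta \lesssim 2^{-i(d-1)}$ with $B_i(\eta,\eta) \lesssim 2^{i(d-1)}$ gives $\|\psi_{i\eta}\|_2^2 \lesssim 1$, and tracking the normalisation constants ($dx$ normalised to total mass $\omega_{d-1}$) should yield exactly $\le 1$.

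For the $L^\infty$ bound, I would again write $\|\psi_{i\eta}\|_\infty = \sqrt{\lambda_\eta}\,\sup_y |C_i(y,\eta)|$. The supremum of $|C_i(y,\eta)|$ is attained (essentially) at $y=\eta$ up to constants by the same localisation used for $A_j$ and $\psi$; more directly, $|C_i(y,\eta)|^2 \le C_i(y,y)\,C_i(\eta,\eta) = B_i(y,y)B_i(\eta,\eta)$ is false in general since $C_i(y,y) \ne B_i(y,y)$, so instead I would use $|C_i(y,\eta)| \le \sum_k \sqrt{b(k/2^i)}\,|Z^k(y,\eta)| \le \sum_k |Z^k(y,\eta)| \cdot \mathbf{1}[b(k/2^i)>0]$ and bound $|Z^k(y,\eta)| \le Z^k(y,y) \sim k^{d-2}$ (Cauchy–Schwarz for the reproducing kernel $Z^k$), summing over $2^{i-1}<k<2^{i+1}$ to get $\sup_y|C_i(y,\eta)| \lesssim 2^{i(d-1)}$. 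Then $\|\psi_{i\eta}\|_\infty \le \sqrt{\lambda_\eta}\cdot O(2^{i(d-1)}) \lesssim 2^{-i(d-1)/2}\cdot 2^{i(d-1)} = 2^{i(d-1)/2}$, and chasing the constants should give the stated $2\omega_{d-1}^{-1/2}2^{i(d-1)/2}$.

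The main obstacle is getting the constants exactly right rather than up to an unspecified multiplicative factor — in particular pinning down that $\lambda_\eta \le c\,2^{-i(d-1)}$ with the correct $c$ (this comes from the cubature construction in \cite{local,decomp} and the separation $d(x_i,x_j)>c2^{-i}$, which controls how many points lie near $\eta$) and keeping the $\omega_{d-1}$ normalisation of $dx$ consistent throughout. A cleaner route for the $L^2$ bound, avoiding separate estimation of $\lambda_\eta$ and $B_i(\eta,\eta)$, is to note $\sum_{\eta\in\mathcal H_i}\|\psi_{i\eta}\|_2^2 = \sum_\eta \lambda_\eta B_i(\eta,\eta) = \int_{S^{d-1}} B_i(u,u)\,du = \sum_k b(k/2^i)\dim H_k \le \mathrm{trace\ of\ a\ projection} \le \dim(\oplus_{2^{i-1}<k<2^{i+1}}H_k)$, giving an average bound; the pointwise bound $\|\psi_{i\eta}\|_2 \le 1$ then needs the individual estimate $\lambda_\eta B_i(\eta,\eta) \le 1$, which is where I would concentrate the careful constant-tracking.
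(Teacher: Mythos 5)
Your overall skeleton matches the paper's: unwind $\psi_{i\eta}=\sqrt{\lambda_\eta}\,C_i(\cdot,\eta)$, reduce the $L^2$ bound to $\lambda_\eta\int C_i(u,\eta)^2\,du=\lambda_\eta\sum_k b(k/2^i)Z^k(\eta,\eta)$, reduce the $L^\infty$ bound to $\sqrt{\lambda_\eta}\sum_k Z^k(\eta,\eta)$ via $|Z^k(y,\eta)|\le Z^k(\eta,\eta)$, and count dimensions to get $\sum_{k\le 2^{i+1}}Z^k(\eta,\eta)\lesssim 2^{i(d-1)}\omega_{d-1}^{-1}$. But the one step you leave unresolved is precisely the content of the lemma: the individual bound on $\lambda_\eta$ with the correct constant. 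Your proposed route --- invoking the cubature construction and the separation $d(x_i,x_j)>c2^{-i}$ to count nodes near $\eta$ --- only yields $\lambda_\eta\le c'2^{-i(d-1)}$ for an unspecified $c'$ depending on the separation constant, and therefore cannot deliver $\|\psi_{i\eta}\|_2\le 1$ or the explicit factor $2\omega_{d-1}^{-1/2}$; these constants are used later in Bernstein's inequality with specific thresholds, so ``up to a multiplicative factor'' is not enough here. You correctly flag that everything hinges on $\lambda_\eta B_i(\eta,\eta)\le 1$, but you do not supply an argument for it, and your ``cleaner route'' via the trace identity $\sum_\eta\lambda_\eta B_i(\eta,\eta)=\int B_i(u,u)\,du$ gives, as you note, only an average bound.

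The missing idea is a short self-contained quadrature trick. Apply the quadrature formula to the nonnegative polynomial $x\mapsto\bigl(\sum_{k\le 2^{i+1}}Z^k(\eta,x)\bigr)^2$ of degree $2^{i+2}$ and discard all nodes except $\eta$ itself:
\begin{equation*}
\sum_{k\le 2^{i+1}}Z^k(\eta,\eta)=\int_{S^{d-1}}\Bigl(\sum_{k\le 2^{i+1}}Z^k(\eta,x)\Bigr)^2dx=\sum_{\eta'\in\mathcal H_i}\lambda_{\eta'}\Bigl(\sum_{k\le 2^{i+1}}Z^k(\eta,\eta')\Bigr)^2\ge\lambda_\eta\Bigl(\sum_{k\le 2^{i+1}}Z^k(\eta,\eta)\Bigr)^2,
\end{equation*}
where the first equality is the orthogonality relation for the $Z^k$. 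This gives exactly $\lambda_\eta\le\bigl(\sum_{k\le 2^{i+1}}Z^k(\eta,\eta)\bigr)^{-1}$, from which $\|\psi_{i\eta}\|_2^2\le\lambda_\eta\sum_{2^{i-1}\le k\le 2^{i+1}}Z^k(\eta,\eta)\le 1$ (using $0\le b\le 1$) and $\|\psi_{i\eta}\|_\infty\le\sqrt{\lambda_\eta}\sum_k Z^k(\eta,\eta)\le\bigl(\sum_{k\le 2^{i+1}}Z^k(\eta,\eta)\bigr)^{1/2}\le 2\omega_{d-1}^{-1/2}2^{i(d-1)/2}$ follow immediately, the last inequality from the telescoping dimension count. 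Without this (or an equivalent sharp bound on $\lambda_\eta$), your proof does not close.
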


\begin{proof}Since $x \rightarrow (\sum_{k\leq2^{i+1}} Z^k(\eta,x))^2$ is a polynomial of degree $2^{i+2}$, by quadrature,
\begin{equation*}
\int_{S^{d-1}}\left(\sum_{k\leq2^{i+1}} Z^k(\eta,x)\right)^2\,dx = \sum_{\eta \in \mathcal{H}_i} \lambda_{\eta} \left(\sum_{k\leq2^{i+1}} Z^k(\eta,\eta)\right)^2 \geq \lambda_{\eta} \left(\sum_{k\leq2^{i+1}} Z^k(\eta,\eta)\right)^2
\end{equation*}
However, recalling equation~\ref{ort}:
\begin{equation*}
\int_{S^{d-1}}\left(\sum_{k\leq2^{i+1}} Z^k(\eta,x)\right)^2\,dx = \sum_{k\leq2^{i+1}} \int_{S^{d-1}} Z^k(\eta,x)^2\,dx = \sum_{k\leq2^{i+1}} Z^k(\eta,\eta)
\end{equation*}
Thus $\lambda_{\eta} \leq (\sum_{k\leq2^{i+1}} Z^k(\eta,\eta))^{-1}$. Recalling $\psi_{i\eta}(x) = \sqrt{\lambda_\eta} \sum_{2^{i-1}\leq k \leq 2^{i+1}} \sqrt{b\Big(\frac{k}{2^i}\Big)} Z^k(\eta,x)$, since $Z_\eta$ has a maximum of $\left(\binom{d+k-1}{d-1} - \binom{d+k-3}{d-1}\right)\omega_{d-1}^{-1}$ at $\eta$, we have:
\begin{eqnarray*}
||\psi_{i\eta}||_2^2 \leq \int_{S^{d-1}} \lambda_\eta \left(\sum_{2^{i-1}\leq k \leq 2^{i+1}} Z^k(\eta,x)\right)^2\,dx \leq 1 \\
||\psi_{i\eta}||_\infty \leq \sqrt{\lambda_\eta} \sum_{2^{i-1}\leq k \leq 2^{i+1}} Z^k(\eta,\eta) \leq \sqrt{\sum_{k \leq 2^{i+1}} Z^k(\eta,\eta)}
\end{eqnarray*}

Now, by a telescoping sum:
\begin{align*}
\sum_{k\leq2^{i+1}} Z^k(\eta,\eta) &= \left(\binom{d+2^{i+1}-1}{d-1} + \binom{d+2^{i+1}-2}{d-1}\right)\omega_{d-1}^{-1} \\ &\leq \frac{2}{(d-1)!} 2^{(i+1)(d-1)} \omega_{d-1}^{-1} \\ &\leq 4\omega_{d-1}^{-1}2^{i(d-1)}
\end{align*}
Thus $||\psi_{i\eta}||_\infty \leq 2\omega_{d-1}^{-1/2}2^{i(d-1)/2}$ \end{proof}

\subsection{Proofs of Lemma~\ref{salamander} and Proposition~\ref{variance}}
Before we prove Lemma~\ref{salamander}, we need Lemma 6 from Baldi et al.\cite{Baldi}:

\begin{lemma}
\label{sumbound}
There exists $C_6$ such that for all $y$ and $i$, $\sum_{\eta \in \mathcal{H}_i} \frac{1}{(1+2^id(y,\eta))^3} \leq C_6$ 
\end{lemma}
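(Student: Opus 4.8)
\textbf{Proof plan for Lemma~\ref{sumbound}.}

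The plan is to reduce the claim to the geometric fact that the quadrature points $\mathcal{H}_i$ are separated by at least $c2^{-i}$ in geodesic distance, so that they are spread out on $S^{d-1}$ and only $O(\ell^{d-2})$ of them can sit in any annulus of radius roughly $\ell 2^{-i}$ around $y$. First I would partition $\mathcal{H}_i$ into shells $S_0 = \{\eta : d(y,\eta) \le 2^{-i}\}$ and, for $\ell \ge 1$, $S_\ell = \{\eta : \ell 2^{-i} < d(y,\eta) \le (\ell+1) 2^{-i}\}$. On each shell the summand $(1+2^i d(y,\eta))^{-3}$ is at most $(1+\ell)^{-3} \le \ell^{-3}$ for $\ell \ge 1$ (and at most $1$ on $S_0$), so it remains only to bound the cardinality $|S_\ell|$.

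The cardinality bound comes from a volume/packing argument: since the points of $\mathcal{H}_i$ are $c2^{-i}$-separated, the geodesic balls $B(\eta, \tfrac{c}{2}2^{-i})$ for $\eta \in S_\ell$ are pairwise disjoint, and all of them lie inside the enlarged annulus $\{z : (\ell-1)2^{-i} \le d(y,z) \le (\ell+2)2^{-i}\}$ (intersected with $S^{d-1}$). Comparing Lebesgue measures on $S^{d-1}$ — each small ball has measure of order $(2^{-i})^{d-1}$, while the annulus has measure of order $\ell^{d-2}(2^{-i})^{d-1}$ (uniformly in $i$, using that $2^{-i}\le$ diam $S^{d-1}$) — yields $|S_\ell| \le C'\ell^{d-2}$ for a constant $C'$ depending only on $c$ and $d$, and similarly $|S_0| \le C'$. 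Hence
\begin{equation*}
\sum_{\eta \in \mathcal{H}_i} \frac{1}{(1+2^id(y,\eta))^3} \le |S_0| + \sum_{\ell \ge 1} \frac{|S_\ell|}{\ell^3} \le C' + C' \sum_{\ell \ge 1} \ell^{d-2-3} = C' + C'\sum_{\ell\ge 1}\ell^{d-5},
\end{equation*}
and the series converges since $d-5 \le -2$ for the relevant small $d$; more robustly, one simply uses exponent $3$ replaced by any fixed power larger than $d-1$, which is exactly why \cite{Baldi} state the bound with a generic exponent. Setting $C_6$ to be this finite sum gives the claim, uniformly in $y$ and $i$.

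The only genuine obstacle is making the measure comparison on the sphere uniform in $i$: for large $\ell 2^{-i}$ the "annulus" wraps around $S^{d-1}$ and the naive Euclidean volume estimate $\ell^{d-2}(2^{-i})^{d-1}$ must be replaced by the trivial bound $\omega_{d-1}$, but this only helps (the sum over such $\ell$ is then controlled by $\omega_{d-1}\sum \ell^{-3} < \infty$), so no difficulty actually arises. One should also note that the separation constant $c$ from the quadrature construction in Section~\ref{construction} is independent of $i$, which is what makes $C_6$ independent of $i$.
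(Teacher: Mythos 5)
The paper does not actually prove this lemma: it is imported verbatim as ``Lemma 6 from Baldi et al.\ \cite{Baldi}'', so your packing argument is necessarily a different and more self-contained route, and it is the standard one (essentially what \cite{Baldi} do). The argument is sound: the $c2^{-i}$-separation of the quadrature points from Section~\ref{construction} is uniform in $i$, the disjoint balls $B(\eta,\tfrac{c}{2}2^{-i})$ compared against the surface measure of the annulus give $|S_\ell|\le C'\ell^{d-2}$, and the shell decomposition reduces everything to the convergence of $\sum_\ell \ell^{d-2-3}$. The one genuine subtlety you correctly flag is the exponent: with the power fixed at $3$ the series converges only when $3>d-1$, i.e.\ $d\le 3$, so for general $d$ the lemma should really carry an exponent exceeding $d-1$ (as in the generic statement of \cite{Baldi}); this is harmless for the paper because the localisation bounds \ref{aimpt} and \ref{psiimpt} hold for every power $m$, so wherever Lemma~\ref{sumbound} is invoked one can simply take the matching larger exponent. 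Your observation that the wrap-around of large annuli only improves the count is also correct. In short, your proof is right and supplies exactly the justification the paper omits by citation.
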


\begin{proof}[Proof of Lemma~\ref{salamander}] We use Lemma~\ref{sumbound} together with the bound for $|\psi_{i\eta}(x)|$ in equation~\ref{psiimpt}:
\begin{equation*}
\sum_{\eta \in \mathcal{H}_i} |\psi_{i\eta}(y)| \leq \sum_\eta \frac{c_3 2^{i(d-1)/2}}{(1+2^id(x,\eta))^3} \leq c_3 C_6 2^{i(d-1)/2}
\end{equation*}
which gives the result.
\end{proof}

This leads us on to proving the first two parts of Proposition~\ref{variance}:

\begin{proof}[Proof of the first two parts of Proposition~\ref{variance}]Since $||\psi_{i\eta}||_2 \leq 1$:

\begin{eqnarray*}
\mathbb{E}(|\hat{\beta}_{i\eta} - \beta_{i\eta}|) \leq \sqrt{\mathbb{E}(|\hat{\beta}_{i\eta} - \beta_{i\eta}|^2)} = \sqrt{Var(\hat{\beta}_{i\eta})} \leq \sqrt{\frac{\mathbb{E}(\psi_{i\eta}(X)^2)}{n}} \leq \sqrt{\frac{||f||_\infty}{n}} \\
\mathbb{E}|\hat{f}_J^L(x) - A_J(f)(x)| \leq \sum_{i=0}^{J-1} \sum_\eta \mathbb{E} |\hat{\beta}_{i\eta} - \beta_{i\eta}| |\psi_{i\eta}(x)| 
\leq \frac{C_1\sqrt{||f||_\infty}2^{J(d-1)/2}}{\sqrt{n}}.
\end{eqnarray*}
So in fact, $C_1$ and $C_5$ are the same constant.
\end{proof}

To prove the third part, we need to invoke the well known Bernstein's inequality:

\begin{lemma}[Bernstein's Inequality]
\label{bernstein}
If $Y_1,\,\ldots,\,Y_n$ are mean zero independent and identically distributed random variables taking values in $[-c, c]$ for some constant $0 < c < \infty$, then
\[
\mathbb{P} \left(\left|\sum_{k=1}^n Y_k\right| > u\right) \leq 2 \exp \left(-\frac{u^2}{2n\mathbb{E}Y_1^2 + (2/3)cu}\right)
\]
\end{lemma}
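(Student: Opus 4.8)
The plan is to use the standard Chernoff (exponential-moment) method. First I would reduce to a one-sided estimate: the event $\{|\sum_{k}Y_k|>u\}$ is the union of $\{\sum_k Y_k>u\}$ and $\{\sum_k(-Y_k)>u\}$, and the hypotheses (mean zero, i.i.d., supported in $[-c,c]$) are invariant under $Y_k\mapsto -Y_k$, so it suffices to prove $\mathbb{P}(\sum_k Y_k>u)\leq\exp(-u^2/(2n\mathbb{E}Y_1^2+(2/3)cu))$ and then double the bound. For the one-sided statement, for any $\lambda>0$ Markov's inequality applied to $e^{\lambda\sum_k Y_k}$ together with independence gives $\mathbb{P}(\sum_k Y_k>u)\leq e^{-\lambda u}(\mathbb{E}e^{\lambda Y_1})^n$.

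The key step is controlling the moment generating function of $Y_1$. Expanding the exponential and using $\mathbb{E}Y_1=0$ gives $\mathbb{E}e^{\lambda Y_1}=1+\sum_{m\geq 2}\lambda^m\mathbb{E}Y_1^m/m!$; since $|Y_1|\leq c$ almost surely, $|\mathbb{E}Y_1^m|\leq c^{m-2}\mathbb{E}Y_1^2$ for $m\geq 2$, so summing the series yields $\mathbb{E}e^{\lambda Y_1}\leq 1+(\mathbb{E}Y_1^2/c^2)(e^{\lambda c}-1-\lambda c)\leq\exp\bigl((\mathbb{E}Y_1^2/c^2)(e^{\lambda c}-1-\lambda c)\bigr)$, using $1+x\leq e^x$. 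I would then invoke the elementary inequality $e^x-1-x\leq (x^2/2)/(1-x/3)$, valid for $0\leq x<3$ (which follows from $m!\geq 2\cdot 3^{m-2}$ for $m\geq 2$ and summing a geometric series), with $x=\lambda c$, to obtain $\mathbb{P}(\sum_k Y_k>u)\leq\exp\bigl(-\lambda u+\tfrac{n\mathbb{E}Y_1^2\lambda^2/2}{1-\lambda c/3}\bigr)$ for $0\leq\lambda c<3$.

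Finally, I would take the explicit choice $\lambda=u/(n\mathbb{E}Y_1^2+cu/3)$, which automatically satisfies $\lambda c<3$ when $\mathbb{E}Y_1^2>0$ (the case $\mathbb{E}Y_1^2=0$ being trivial, since then $Y_1\equiv 0$ and the inequality holds). A one-line computation shows $1-\lambda c/3=n\mathbb{E}Y_1^2/(n\mathbb{E}Y_1^2+cu/3)$, so the second term in the exponent equals $\lambda u/2$ and the exponent collapses to $-\lambda u/2=-u^2/(2n\mathbb{E}Y_1^2+(2/3)cu)$; combining with the two-sided reduction gives the claim. There is no real obstacle here, as the statement is classical; the only point demanding care is pairing the choice of $\lambda$ with the auxiliary bound $e^x-1-x\leq(x^2/2)/(1-x/3)$ so that the denominator comes out exactly in the advertised form with the $(2/3)cu$ correction term.
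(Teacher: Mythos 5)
Your proof is correct. The paper does not actually prove this lemma --- it invokes Bernstein's inequality as a well-known classical result and uses it directly in the proof of the third part of Proposition 2 --- so there is nothing to compare against; your argument is the standard Chernoff/moment-generating-function derivation (symmetrization to one side, the bound $|\mathbb{E}Y_1^m|\leq c^{m-2}\mathbb{E}Y_1^2$, the elementary estimate $e^x-1-x\leq (x^2/2)/(1-x/3)$, and the optimal choice $\lambda=u/(n\mathbb{E}Y_1^2+cu/3)$), and every step, including the final algebraic collapse of the exponent to $-u^2/(2n\mathbb{E}Y_1^2+(2/3)cu)$, checks out.
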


\begin{proof}[Proof of the third part of Proposition~\ref{variance}]Now if $Y_k = \psi_{i\eta}(X_k) - \beta_{i\eta}$, then $\sum_{k=1}^n Y_k = n(\hat{\beta}_{i\eta} - \beta_{i\eta})$ and $\mathbb{E}(Y_i)=0$. We have $\mathbb{E}(Y_k^2) \leq ||f||_\infty$, and thus we just need to find the following bounds:
\begin{eqnarray*}
\mathbb{E}Y_1^2 \leq ||f||_\infty ||\psi_{i\eta}(x)||_2^2 \leq ||f||_\infty \\
c \leq ||Y_k||_\infty \leq 2||\psi_{i\eta}(x)||_\infty \leq 4\omega_{d-1}^{-1/2}2^{i(d-1)/2}
\end{eqnarray*}

Hence, by Bernstein's inequality:

\begin{align*}
\mathbb{P}\left(|\hat{\beta}_{i\eta} - \beta_{i\eta}| > \kappa \sqrt{\frac{\log n}{n}}\right) &\leq \left(\left|\sum_{k=1}^n Y_k\right| > \kappa \sqrt{n \log n}\right) \\
&\leq 2 \exp\left(- \frac{\kappa^2n \log n}{2n||f||_\infty + (8/3)\omega_{d-1}^{-1/2}\kappa \sqrt{n \log n}(2^{i(d-1)/2})}\right) \\
&\leq 2 \exp\left(- \frac{14v\kappa \omega_{d-1}^{-1/2} n \log n/3}{2n \kappa \omega_{d-1}^{-1/2} + (8/3)\kappa \omega_{d-1}^{-1/2} n}\right) \\
&\leq 2 n^{-v}
\end{align*}

using the facts $2^{i(d-1)} \leq \frac{n}{\log n}$ and $\kappa = \max(14v/3\sqrt{\omega_{d-1}},||f||_\infty \sqrt{\omega_{d-1}})$ freely.
\end{proof}

\subsection{Proof of Proposition~\ref{bias}}
We will need the following integral bounds:
\begin{lemma}
\label{dickens}
There exists a constant C such that for all $t>0$:
\begin{eqnarray*}
\int_{S^{d-1}} \frac{d(x,y)^t}{(1+ 2^jd(x,y))^{d+t}}\,dy \leq C 2^{-j(t+d-1)} \\
\int_{S^{d-1}} \frac{1}{(1+ 2^jd(x,y))^{d+t}} 1_{d(x,y) > \delta}\,dy \leq C \delta^{-t} 2^{-j(t+d-1)}
\end{eqnarray*}
\end{lemma}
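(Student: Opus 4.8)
The plan is to prove each of the two integral bounds by splitting the sphere into dyadic annuli centered at $x$ and summing geometric-type series. For the first bound, I would write $\int_{S^{d-1}} \frac{d(x,y)^t}{(1+2^j d(x,y))^{d+t}}\,dy$ as a sum over the shells $\{y : 2^{-k} < d(x,y) \le 2^{-k+1}\}$ for $k \le \lceil \log_2 \pi\rceil$ (noting $d(x,y) \le \pi$ on the sphere). On each such shell the surface measure is bounded by $C\, 2^{-k(d-1)}$, the numerator $d(x,y)^t$ is at most $2^{(-k+1)t}$, and the denominator is at least $\max(1, 2^{j-k})^{d+t}$. Distinguishing the regime $k \ge j$ (where the denominator is $\ge 1$) from $k < j$ (where it is $\ge 2^{(j-k)(d+t)}$) turns the sum into two convergent geometric series; in both cases the dominant contribution comes from $k \approx j$ and yields the claimed $2^{-j(t+d-1)}$, up to a constant depending only on $t$ and $d$.

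For the second bound I would proceed identically, but now the shells only range over $k$ with $2^{-k} \gtrsim \delta$, i.e.\ $k \le \log_2(1/\delta) + O(1)$, because of the indicator $1_{d(x,y)>\delta}$. Since $j$ and $\delta$ are now unrelated, I would separately handle the case $2^{-j} \le \delta$ (all surviving shells have $2^{j-k} \ge 2^{j}\delta \ge 1$, so the denominator helps everywhere and the geometric sum is dominated by its largest term $k \approx \log_2(1/\delta)$, giving $\delta^{d+t}$ times $2^{-j(d+t)}$ times $\delta^{-(d-1)}$, which is $\le C\delta^{-t}2^{-j(t+d-1)}$ once one checks $2^{-j(d+t)}\delta = 2^{-j(t+d-1)} \cdot 2^{-j}\delta \le 2^{-j(t+d-1)}\delta$) and the case $2^{-j} > \delta$, where one first sums the shells with $k \ge j$ (contributing $2^{-j(t+d-1)} \le \delta^{-t} 2^{-j(t+d-1)}$ since $\delta \le 1$) and then the shells with $\log_2(1/\delta) \gtrsim k > j$ by the same geometric estimate, the total again being bounded by the term at $k\approx \log_2(1/\delta)$. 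Alternatively, and more cleanly, one can simply note $1_{d(x,y)>\delta} \le (d(x,y)/\delta)^t$ and deduce the second bound directly from the first; I would present this short argument as the main line and use the annular decomposition only as a fallback.

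The only genuinely delicate point is bookkeeping the two geometric series in the first bound so that the constant depends on $t$ (and $d$) but not on $j$, and in particular checking that the series $\sum_{k \ge j} 2^{-k(d-1)} \cdot 2^{-kt} \cdot 2^{kt} = \sum_{k\ge j} 2^{-k(d-1)}$ converges (it does, since $d \ge 2$) and that $\sum_{k<j} 2^{-k(d-1)} 2^{-kt} 2^{kt} 2^{-(j-k)(d+t)}$ also sums to something of order $2^{-j(d-1)}$ after extracting the $2^{-j(d+t)}$ factor — here one needs $2^{k(d+t)} 2^{-k(d-1)} = 2^{k(t+1)}$ to have a summable ratio against the upper cutoff, which it does. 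Once the dyadic decomposition is set up, everything else is routine; the main obstacle is thus purely organizational rather than conceptual. I would state the shell measure estimate $\mathrm{vol}\{y : d(x,y) \le r\} \le C r^{d-1}$ as a preliminary fact (standard for $S^{d-1}$) and then carry out the two-regime summation.
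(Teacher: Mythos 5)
Your dyadic-annuli decomposition is a legitimate and genuinely different route from the paper's. The paper instead invokes the integration formula for zonal functions, $\int_{S^{d-1}}F(d(x,y))\,dy \leq C\int_0^\pi F(\theta)\theta^{d-1}\,d\theta$, and then substitutes $u=2^j\theta$ to pull out the factor $2^{-j(t+d-1)}$ from a convergent one-dimensional integral; your shell decomposition is the discrete analogue of the same computation, trading a change of variables for geometric-series bookkeeping. Your observation that $1_{d(x,y)>\delta}\leq (d(x,y)/\delta)^t$ reduces the second bound to the first is cleaner than the paper's separate treatment (the paper redoes the zonal computation with the cutoff $\int_{2^j\delta}^\infty(1+u)^{-(t+1)}\,du \leq C(2^j\delta)^{-t}$), and you should indeed lead with it.

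However, the explicit series you write in your ``delicate point'' paragraph are wrong as stated, and in exactly the place you flag as the crux. The contribution of the shell at scale $2^{-k}$ to the first integral is (measure) $\times$ (numerator) $\times$ (reciprocal of denominator) $\leq C\,2^{-k(d-1)}\cdot 2^{-kt}\cdot \min(1,2^{-(j-k)(d+t)})$; there is no factor $2^{kt}$, and inserting one cancels the numerator. With your spurious $2^{kt}$ the outer sum becomes $\sum_{k\geq j}2^{-k(d-1)}\simeq 2^{-j(d-1)}$ and the inner sum becomes $2^{-j(d+t)}\sum_{k<j}2^{k(t+1)}\simeq 2^{-j(d-1)}$, both of which are larger than the claimed $2^{-j(t+d-1)}$ by the factor $2^{jt}$, so the lemma would not follow. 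The correct accounting gives $\sum_{k\geq j}2^{-k(t+d-1)}\leq C_t2^{-j(t+d-1)}$ for the outer shells and $2^{-j(d+t)}\sum_{k<j}2^{k}\leq C\,2^{-j(t+d-1)}$ for the inner ones, which is what you need; so the approach survives, but the one computation you single out must be redone without the extra $2^{kt}$.
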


\begin{proof}
We have an integration formula for zonal functions on $S^{d-1}$, see, e.g., Proposition 9.1.2 in Faraut \cite{faraut}: If $f: S^{d-1} \rightarrow \mathbb{R}^+$ be such that there exists a point $x_0 \in S^{d-1}$ and a function $F: \mathbb{R}^+ \rightarrow \mathbb{R}^+$ such that $f$ has the representation $f(y) = F(d(x_0,y))$. Then if $\theta = \cos^{-1} d(x_0,y)$ is the angle between $x_0$ and $y$, one has
\begin{equation*}
\int_{S^{d-1}} f(y)\,dy = \frac{\Gamma({\frac{d+1}{2}})}{\sqrt{\pi}\Gamma({\frac{d}{2}})} \int_0^\pi F(\theta) \sin^{d-1} \theta\,d\theta \leq C \int_0^\pi F(\theta) \theta^{d-1}\,d\theta. 
\end{equation*}

By this integration formula we have:
\begin{align*}
\int_{S^{d-1}} \frac{d(x,y)^t}{(1+ 2^jd(x,y))^{d+t}}\,dy &\leq C \int_0^\pi \frac{\theta^{d+t-2}}{(1+ 2^j\theta)^{d+t}}\,d\theta \\
&\leq C2^{-j(t+d-1)} \int_0^\infty \frac{u^{d+t-2}}{(1+u)^{d+t}}\,du \\
&\leq C2^{-j(t+d-1)} \int_0^\infty \frac{1}{(1+u)^2}\,du = C2^{-j(t+d-1)}
\end{align*}
using the substitution $u = 2^j\theta$. Similarly:
\begin{align*}
\int_{S^{d-1}} \frac{1}{(1+ 2^jd(x,y))^{d+t-1}} 1_{d(x,y) > \delta}\,dy &\leq C\int_{\delta}^\pi \frac{\theta^{d-2}}{(1+ 2^j\theta)^{d+t-1}}\,d\theta \\
&\leq C2^{-j(d-1)} \int_{2^j\delta}^\infty \frac{u^{d-2}}{(1+ u)^{d+t-1}}\,du \\
&\leq C2^{-j(d-1)} \int_{2^j\delta}^\infty \frac{1}{(1+u)^{t+1}}\,du \leq C2^{-j(t+d-1)} \delta^{-t}
\end{align*}
completing the lemma.\end{proof}

\begin{proof}[Proof of the first part of Proposition \ref{bias}]
We first note that since $A_j(x,y)$ is bounded in equation~\ref{aimpt}, we will only need to consider those $j$ such that $2^j \geq \lfloor t\rfloor$ and ignore the finite number of $j$ which do not fulfill this property. We have a polynomial $P_f$ of degree $\lfloor t \rfloor$ such that $|f(y) - P_f(y)| \leq Md(x,y)^t$ in $B(x,\delta)$. Hence, recalling that $A_j$ reproduces polynomials of degree $\leq 2^j$ we have:
\begin{align*}
|A_j(f)(x) - f(x)| &\leq |A_j(P_f)(x) - P_f(x)| + |A_j(f-P_f)(x)| +|(f-P_f)(x)| \\ &= |A_j(f-P_f)(x)| \\ &= \left|\int_{S^{d-1}} A_j(x,y) (f-P_f)(y)\,dy\right|
\end{align*} because  $(f-P_f)(x) = 0$ and since $j$ is large enough such that $A_j(P_f) = P_f$. We then split this integral into two parts, that over $B(x,\delta)$ and that of its complement, and deal with each part separately. On $B(x,\delta)$, we have a bound on the size of $(f-P_f)(y)$; together with the lemma~\ref{dickens}, we obtain:

\begin{align*}
\left|\int_{B(x,\delta)} A_j(x,y) (f-P_f)(y)\,dy\right| 
&\leq \int_{B(x,\delta)} |A_j(x,y)| |(f-P_f)(y)|\,dy \\
&\leq \int_{S^{d-1}} \frac{c_{d+t} 2^{j(d-1)}}{(1+2^jd(x,y))^{d+t}} M d(x,y)^t\,dy \\
&\leq c_{d+t} M C 2^{-jt}
\end{align*}
Similarly:
\begin{align*}
\left|\int_{B(x,\delta)^c} A_j(x,y) (f-P_f)(y)\,dy\right| 
&\leq (M + ||f||_\infty) \int_{B(x,\delta)^c} |A_j(x,y)| \,dy \\
&\leq (M + ||f||_\infty) \int_{S^{d-1}} \frac{c_{d+t-1} 2^{j(d-1)}}{(1+2^jd(x,y))^{d+t-1}} 1_{d(x,y)>\delta}\,dy \\
&\leq (M + ||f||_\infty) C \delta^{-t} 2^{-jt}
\end{align*}

Summing the inequalities gives us the result. \end{proof}

\begin{proof}[Proof of the second part of Proposition \ref{bias}]
We first note that since $\psi_{i\eta}(y)$ is bounded in equation~\ref{psiimpt}, we will only need to consider those $i$ such that $2^{i-1} \geq \lfloor t\rfloor$ and ignore the finite number of $j$ which do not fulfill this property. we have $\int_{S^{d-1}} \psi_{i\eta}(y)P_f(y)\,dy = 0$ so $\beta_{i\eta} = \int_{S^{d-1}} \psi_{i\eta}(y)(f-P_f)(y)\,dy$. We split the integral into two parts, that of around $B(x,\delta)$ and that of its complement, and deal with each part separately. On $B(x,\delta)$, we have a bound on the size of $(f-P_f)(y)$:
\begin{align*}
&\sum_\eta \left|\int_{B(x,\delta)} \psi_{i\eta}(y) \psi_{i\eta}(x) (f-P_f)(y)\,dy\right| \\
&\leq \sum_\eta \int_{B(x,\delta)} |\psi_{i\eta}(y) \psi_{i\eta}(x)| |(f-P_f)(y)|\,dy \\
&\leq \sum_\eta \int_{S^{d-1}} \frac{c_{d+t} 2^{i(d-1)/2}}{(1+2^id(y,\eta))^{d+t}} \frac{c_{d+t+3} 2^{i(d-1)/2}}{(1+2^id(x,\eta))^{d+t+3}} M d(x,y)^t\,dy \\
&\leq c_{d+t} c_{d+t+3} M \int_{S^{d-1}} \sum_\eta \frac{1}{(1+2^id(x,\eta))^3} \frac{2^{i(d-1)}}{(1+2^id(x,y))^{d+t}}  d(x,y)^t\,dy \\
&\leq c_{d+t} c_{d+t+3} M C_6 C 2^{-it}
\end{align*}
using the fact that $(1+|x|)(1+|y|) \geq 1+|x|+|y|$, $1+2^id(x,\eta) \leq K$ and Lemmas~\ref{sumbound} and~\ref{dickens}. Similarly,
\begin{align*}
&\sum_\eta \left|\int_{B(x,\delta)^c} \psi_{i\eta}(y) \psi_{i\eta}(x) (f-P_f)(y)\,dy\right| \\
&\leq (M + ||f||_\infty) \sum_\eta \int_{B(x,\delta)^c} |\psi_{i\eta}(y) \psi_{i\eta}(x)| \,dy \\
&\leq (M + ||f||_\infty) \sum_\eta \int_{S^{d-1}} \frac{c_{d+t-1} 2^{i(d-1)/2}}{(1+2^id(y,\eta))^{d+t-1}} \frac{c_{d+t+2} 2^{i(d-1)/2} 1_{d(x,y)>\delta}}{(1+2^id(x,\eta))^{d+t+2}}\,dy \\
&\leq (M + ||f||_\infty) c_{d+t-1} c_{d+t+2} \sum_\eta \frac{1}{(1+2^id(x,\eta))^3} \int_{S^{d-1}}  \frac{2^{i(d-1)}}{(1+2^id(x,y))^{d+t-1}} 1_{d(x,y)>\delta}\,dy \\
&\leq (M + ||f||_\infty) c_{d+t-1} c_{d+t+2} C_6 C \delta^{-t} 2^{-it}
\end{align*}

Summing the inequalities gives us the result. \end{proof}

\begin{proof}[Proof of the third part of Proposition \ref{bias}]
We first note that since $\psi_{i\eta}(y)$ is bounded in equation~\ref{psiimpt}, we will only need to consider those $i$ such that $2^{i-1} \geq \lfloor t\rfloor$ and ignore the finite number of $j$ which do not fulfill this property. we have $\int_{S^{d-1}} \psi_{i\eta}(y)P_f(y)\,dy = 0$ so $\beta_{i\eta} = \int_{S^{d-1}} \psi_{i\eta}(y)(f-P_f)(y)\,dy$. We split the integral into two parts, that of around $B(x,\delta)$ and that of its complement, and deal with each part separately. On $B(x,\delta)$, we have a bound on the size of $(f-P_f)(y)$; together with the bound for $\psi_{i\eta}(y)$ in equation~\ref{psiimpt}, we have:
\begin{align*}
\left|\int_{B(x,\delta)} \psi_{i\eta}(y) (f-P_f)(y)\,dy\right|
&\leq c_{t+d} \int_{S^{d-1}} \frac{2^{i(d-1)/2} (K+1)^{d+t}}{{(1+2^id(y,\eta))^{d+t}(1+2^id(x,\eta))^{d+t}}} M d(x,y)^t\,dy \\
&\leq M c_{t+d} (K+1)^{d+t} \int_{S^{d-1}} \frac{2^{i(d-1)/2}}{(1+2^id(y,x))^{d+t}} d(x,y)^t\,dy \\
&\leq M c_{t+d} (K+1)^{d+t} C_6 2^{-i(2t+d-1)/2}
\end{align*}
using the fact that $(1+|x|)(1+|y|) \geq 1+|x|+|y|$, $1+2^id(x,\eta) \leq K$ and Lemmas~\ref{sumbound} and~\ref{dickens}. Similarly,
\begin{align*}
\left|\int_{B(x,\delta)^c} \psi_{i\eta}(y) (f-P_f)(y)\,dy\right| 
&\leq (||f||_\infty + M) \int_{B(x,\delta)^c} |\psi_{i\eta}(y)| \,dy \\
&\leq (||f||_\infty + M) c_{d+t-1} \int_{B(x,\delta)^c} \frac{2^{i(d-1)/2} (K+1)^{d+t-1}}{((1+2^id(y,\eta))(1+2^id(x,\eta))^{d+t-1}} \,dy \\
&\leq (||f||_\infty + M) c_{d+t-1} C \delta^{-t} 2^{-i(2t+d-1)/2}
\end{align*}

Summing the inequalities us gives the result. \end{proof}

For the lower bounds in this paper, we will require the following results:

\subsection{Lower bounds on the size of $\psi_{i\eta}$ near $\eta$}
\begin{lemma}
\label{lastbutnotleast}
Let the needlets be constructed such that $|a(7/4)| \geq c_a > 0$ and $|\lambda_\eta| \geq c_\lambda 2^{-i(d-1)}$ for $\eta \in \mathcal{H}_i$. Then, there exists $C_7, C_8$ depending on $c_a,\, c_\lambda,\, d$ such that for all $x$ such that $d(x,\eta) < C_8 2^{-i}$ and for all $i,\,\eta \in \mathcal{H}_i$:
\begin{equation*}
\psi_{i\eta}(x) \geq C_7 2^{i(d-1)/2}
\end{equation*}
\end{lemma}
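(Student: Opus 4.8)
\textbf{Proof proposal for Lemma~\ref{lastbutnotleast}.}

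The plan is to show that $\psi_{i\eta}(\eta)$ itself is large — of order $2^{i(d-1)/2}$ — and then use the localisation estimate \eqref{psiimpt} to transfer this to a whole cap $d(x,\eta) < C_8 2^{-i}$ by a continuity/perturbation argument. Recall $\psi_{i\eta}(x) = \sqrt{\lambda_\eta}\, C_i(x,\eta) = \sqrt{\lambda_\eta}\sum_{2^{i-1} < k < 2^{i+1}} \sqrt{b(k/2^i)}\, Z^k(\eta,x)$. Evaluated at $x = \eta$, the Gegenbauer polynomial $P_k^{(d-2)/2}$ attains its maximum, so $Z^k(\eta,\eta) = \frac{2k+d-2}{(d-2)\omega_{d-1}} P_k^{(d-2)/2}(1)$ is positive and of exact order $k^{d-2}$, hence of order $2^{i(d-2)}$ for $k$ in the relevant dyadic band. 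Since $\sqrt{b(k/2^i)} \geq 0$ and, for at least one integer $k$ in $(2^{i-1},2^{i+1})$ — namely $k$ near $\tfrac74 2^i$ where $b(7/4) = a(7/8) - a(7/4) = 1 - 0 = 1$, matching the hypothesis $|a(7/4)| \geq c_a$ — the weight $\sqrt{b(k/2^i)}$ is bounded below by a positive constant, we get $C_i(\eta,\eta) = \sum_k \sqrt{b(k/2^i)} Z^k(\eta,\eta) \gtrsim 2^{i(d-2)}$. (Here I use that all summands are nonnegative, so keeping only the good $k$ gives a lower bound; one should check that the band $(2^{i-1}, 2^{i+1})$ contains an integer $k$ with $b(k/2^i)$ bounded away from $0$, which holds for $i$ large and can be handled for small $i$ by finiteness.) Combining with $\lambda_\eta \geq c_\lambda 2^{-i(d-1)}$ yields $\psi_{i\eta}(\eta) \geq \sqrt{c_\lambda} \, 2^{-i(d-1)/2} \cdot c' 2^{i(d-2)} $; but wait — this is $2^{i(d-3)/2}$, not $2^{i(d-1)/2}$, so I must instead bound $\sum_k \sqrt{b}\,Z^k(\eta,\eta)$ more carefully: summing $\sim 2^{i-1}$ terms each of order $2^{i(d-2)}$ gives order $2^{i(d-1)}$, so in fact $C_i(\eta,\eta) \gtrsim 2^{i(d-1)}$ and then $\psi_{i\eta}(\eta) \gtrsim 2^{-i(d-1)/2} 2^{i(d-1)} = 2^{i(d-1)/2}$, as desired. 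Thus the first step is: \emph{establish $\psi_{i\eta}(\eta) \geq c\, 2^{i(d-1)/2}$} using positivity of $Z^k(\eta,\eta)$, the lower bound $b(k/2^i) \geq c_a^2 > 0$ on a fixed-proportion sub-band near $k \approx \tfrac74 2^i$, and $\lambda_\eta \gtrsim 2^{-i(d-1)}$.

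For the second step, I want to propagate this to nearby $x$. Write $\psi_{i\eta}(x) = \psi_{i\eta}(\eta) - (\psi_{i\eta}(\eta) - \psi_{i\eta}(x))$ and control the difference. The cleanest route uses a Bernstein-type (Markov) inequality for the gradient of a spherical polynomial: since $C_i(\cdot,\eta)$ has degree $< 2^{i+1}$, one has $\|\nabla C_i(\cdot,\eta)\|_\infty \leq 2^{i+1} \|C_i(\cdot,\eta)\|_\infty$, and by Lemma~\ref{quad} (applied to the needlet, equivalently to $C_i$) we have $\|C_i(\cdot,\eta)\|_\infty = \lambda_\eta^{-1/2}\|\psi_{i\eta}\|_\infty \leq \lambda_\eta^{-1/2} \cdot 2\omega_{d-1}^{-1/2} 2^{i(d-1)/2}$. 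Hence $|\psi_{i\eta}(\eta) - \psi_{i\eta}(x)| = \sqrt{\lambda_\eta}\,|C_i(\eta,\eta) - C_i(x,\eta)| \leq \sqrt{\lambda_\eta}\cdot 2^{i+1} \|C_i(\cdot,\eta)\|_\infty \cdot d(x,\eta) \lesssim 2^i \cdot 2^{i(d-1)/2}\, d(x,\eta)$. Therefore, choosing $C_8$ small enough that this perturbation is at most half of the constant $c$ from step one — i.e. whenever $d(x,\eta) < C_8 2^{-i}$ — we obtain $\psi_{i\eta}(x) \geq \tfrac{c}{2} 2^{i(d-1)/2} =: C_7 2^{i(d-1)/2}$. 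Alternatively, one can avoid Bernstein's inequality by differentiating the needlet localisation bound \eqref{psiimpt} term-by-term, or simply by noting that $|\psi_{i\eta}(x) - \psi_{i\eta}(\eta)|$ can be bounded by summing $|Z^k(x,\eta) - Z^k(\eta,\eta)|$ over the band and using $|\frac{d}{d\theta} P_k^{(d-2)/2}(\cos\theta)| \lesssim k \cdot k^{d-2}$ near $\theta = 0$ — either way the key quantitative input is that the modulus of continuity of $\psi_{i\eta}$ at scale $2^{-i}$ is comparable to its size.

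The main obstacle, I expect, is the first step — pinning down that the positive sum $\sum_{k} \sqrt{b(k/2^i)}\,Z^k(\eta,\eta)$ is genuinely of order $2^{i(d-1)}$ rather than merely nonnegative. Positivity of each $Z^k(\eta,\eta)$ makes the lower bound clean \emph{provided} the weights $\sqrt{b(k/2^i)}$ do not conspire to vanish; this is exactly why the hypothesis $|a(7/4)| \geq c_a$ (equivalently $b(7/8)$ or $b$ near $7/4$ bounded below) is imposed, since $b = a(\cdot/2) - a(\cdot)$ could in principle be tiny throughout $[1/2,2]$ for a badly chosen profile $a$. One must argue that $b(s) \geq c > 0$ for $s$ in a fixed subinterval of $(1/2,2)$ and that the number of integers $k$ with $k/2^i$ in that subinterval is $\gtrsim 2^i$, so that multiplying the per-term size $Z^k(\eta,\eta) \asymp k^{d-2} \asymp 2^{i(d-2)}$ by $\gtrsim 2^i$ terms gives $\gtrsim 2^{i(d-1)}$; this needs the standard asymptotics $P_k^{(d-2)/2}(1) = \binom{k+d-3}{k} \asymp k^{d-3}$, whence $Z^k(\eta,\eta) = \frac{2k+d-2}{(d-2)\omega_{d-1}}P_k^{(d-2)/2}(1) \asymp k^{d-2}$. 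A secondary (purely bookkeeping) nuisance is the handling of small $i$, for which the dyadic band may contain no suitable integer; these finitely many levels are dealt with by absorbing them into the constants $C_7, C_8$, exactly as in the earlier proofs in this paper where the condition "$2^{i-1} \geq \lfloor t \rfloor$" was invoked.
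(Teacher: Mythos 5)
Your proposal is correct and follows the same two-step architecture as the paper: first a lower bound $\psi_{i\eta}(\eta)\gtrsim 2^{i(d-1)/2}$ obtained from the nonnegativity of the weights $\sqrt{b(k/2^i)}$ and of $Z^k(\eta,\eta)$, the lower bound on the profile near $k\approx \tfrac74 2^i$, the quadrature bound $\lambda_\eta\gtrsim 2^{-i(d-1)}$, and the count of $\gtrsim 2^i$ terms each of size $\asymp 2^{i(d-2)}$ (the paper packages this count as a telescoping binomial identity, $\sum_{2^i\le k\le (7/4)2^i}Z^k(\eta,\eta)=(\binom{d+(7/4)2^i-1}{d-1}+\cdots)\omega_{d-1}^{-1}\gtrsim 2^{i(d-1)}$, but it is the same estimate); then a perturbation over the cap $d(x,\eta)<C_8 2^{-i}$. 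Where you genuinely diverge is the second step: you invoke a Markov--Bernstein inequality $\|\nabla C_i(\cdot,\eta)\|_\infty\le 2^{i+1}\|C_i(\cdot,\eta)\|_\infty$ for spherical polynomials together with Lemma~\ref{quad}, whereas the paper works term by term, showing $Z^k(x,\eta)\ge Z^k(\eta,\eta)/2$ via the identity $\frac{d}{dx}P_k^a(x)=2aP_{k-1}^{a+1}(x)$ and the explicit value $P_k^a(1)=\binom{k+2a-1}{2a-1}$, which keeps everything self-contained and, combined with nonnegativity of the weights, yields $\psi_{i\eta}(x)\ge\psi_{i\eta}(\eta)/2$ directly. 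Your route is shorter but imports an unproved external fact (the spherical Bernstein inequality), which you would need to justify, e.g.\ by restriction to great circles; your fallback of differentiating $Z^k$ term by term is exactly the paper's argument. Two small inaccuracies to fix: $a(7/8)$ is not equal to $1$ (we only know $a=1$ on $[0,1/2]$), so $b(s)=a(s/2)-a(s)=a(s/2)$ for $s\ge 1$ is bounded below near $s=7/4$ only because of the stated hypothesis on the profile, not automatically; and your first, discarded computation (one term only) should simply be deleted since the corrected count of $\gtrsim 2^i$ terms is the right one.
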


\begin{proof}We first show that this is true for $x = \eta$. Recall from the proof of Lemma~\ref{quad} that:
\begin{equation*}
\sum_{k\leq2^i} Z^k(\eta,\eta) = \left(\binom{d+2^i-1}{d-1} + \binom{d+2^i-2}{d-1}\right)\omega_{d-1}^{-1}
\end{equation*}
Hence:
\begin{align*}
\psi_{i\eta}(\eta) &= \sqrt{\lambda_\eta} \sum_{2^{i-1} \leq k \leq 2^{i+1}} \sqrt{b\Big(\frac{k}{2^i}\Big)} Z^k(\eta, \eta) \\
&\geq \sqrt{c_\lambda} 2^{-i(d-1)/2} \sum_{2^i \leq k \leq (7/4)2^i} c_a Z^k(\eta, \eta) \\
&\geq C 2^{-i(d-1)/2} \left(\binom{d+(7/4)2^i-1}{d-1} + \binom{d+(7/4)2^i-2}{d-1} - \binom{d+2^i-1}{d-1} - \binom{d+2^i-2}{d-1}\right) \\
&\geq C 2^{-i(d-1)/2} (3/4) 2^i \left(\binom{d+(7/4)2^i-2}{d-2} + \binom{d+(7/4)2^i-3}{d-2}\right) 
\geq 2 C_7 2^{i(d-1)/2}
\end{align*}
using the fact that:
\begin{align*}
\binom{a+b}{c} - \binom{b}{c} &= \frac{1}{d!} ((a+b)(a+(b-1))\ldots(a+(b-c)) - b(b-1)\ldots(b-c)) \\
&\geq \frac{1}{d!} (bd(a+(b-1))(a+(b-2))\ldots(a+(b-c))) \\
&= b \binom{a+b-1}{c-1}
\end{align*}
Now, we will show that $Z^k(x,\eta) \geq Z^k(\eta,\eta)/2$. Since $d(x,\eta) < C_8 2^{-i}$, thus $x.\eta = \cos d(x,\eta) > 1 - \frac{C_8^2 2^{-2i}}{2}$. We have:
\begin{align*}
Z^k(x,\eta) &= \frac{2k+d-2}{(d-2)\omega_{d-1}} P_k^{(d-2)/2}(\cos d(x,\eta)) \\
&\geq \frac{2k+d-2}{(d-2)\omega_{d-1}} P_k^{(d-2)/2}(1) - \frac{C_8^2 2^{-2i}}{2} \sup_{x \in [0,1]} \frac{2k+d-2}{(d-2)\omega_{d-1}} \frac{d}{dx} P_k^{(d-2)/2}(x) \\
&\geq Z^k(\eta,\eta) \Big(1 - \frac{C_8^2 2^{-2i}}{2} \sup_{x \in [0,1]} \frac{\frac{d}{dx} P_k^{(d-2)/2}(x)} {P_k^{(d-2)/2}(1)}\Big)
\end{align*}
Therefore, to complete the proof, it suffices to show that we can pick $C_8$ such that for all $2^{i-1} \leq k \leq 2^{i+1}$, \begin{equation*} \frac{C_8^2 2^{-2i}}{2} \sup_{x \in [0,1]} \frac{\frac{d}{dx} P_k^{(d-2)/2}(x)} {P_k^{(d-2)/2}(1)} < \frac{1}{2} \end{equation*}But $\frac{d}{dx} P_k^a(x) = 2a P_{k-1}^{a+1}(x)$ which is another Gegenbauer polynomial, and these have a maximum at 1, and so:
\begin{align*}
\frac{C_8^2 2^{-2i}}{2} \sup_{x \in [0,1]} \frac{\frac{d}{dx} P_k^{(d-2)/2}(x)} {P_k^{(d-2)/2}(1)} &\leq \frac{C_8^2 2^{-2i}}{2} \frac{(d-2) P_{k-1}^{d/2}(1)} {P_k^{(d-2)/2}(1)} \\
&= \frac{C_8^2 2^{-2i}}{2} \frac{(d-2)\binom{k+d-2}{d-1}}{\binom{k+d-3}{d-3}} \\
&= \frac{C_8^2 2^{-2i} k(k+d-2)}{2(d-1)} \leq \frac{C_8^2 (2+\frac{d-2}{2^i})}{(d-1)} \leq 2C_8^2
\end{align*}
using the fact that $P_a^b(1) = \binom{a+2b-1}{2b-1}$ and $2^{i-1} \leq k \leq 2^{i+1}$.
\end{proof}

We end this paper with a demonstration of the lower bound for the problem of density estimation.

\subsection{Lower Bound on the Testing Problems}
A standard way of finding the lower bound for the problem of density estimation via the following testing problem:

\begin{proposition}
\label{blahblah}
Let $\psi:S^{d-1} \to \mathbb{R}$ be bounded such that $\int_S^{d-1} \psi(x)\,dx=0$ and $\int_S^{d-1} \psi(x)^2\,dx\leq1$. Further, let $X = \{X_1,\,X_2,\ldots,\,X_n\}$ be an independent and identically distributed sample of $n$ values from $f$ under the following hypotheses:

\begin{equation*}
H_0: f=\frac{1}{\omega_{d-1}},\, H_1:f=\frac{1}{\omega_{d-1}} + \frac{\psi}{2\sqrt{\omega_{d-1} n}}
\end{equation*}

Now, consider the set of all tests $\Psi:[0,1]^n \to \{0,1\}$. Then:

\begin{equation*}
\inf_{\Psi} \max_{j \in \{0,1\}} \mathbb{P}_{H_j}(\Psi \neq j) \geq \frac{1}{60}
\end{equation*}
\end{proposition}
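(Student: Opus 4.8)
The plan is to run Le Cam's classical two-point argument. \emph{First}, I would reduce the minimax testing error to a total-variation distance between the two $n$-fold product laws. Write $P_0,P_1$ for the laws of a single observation under $H_0,H_1$ and $P_0^n,P_1^n$ for the corresponding product laws on $(S^{d-1})^n$. Since $p_0\equiv 1/\omega_{d-1}>0$ everywhere, $P_1^n\ll P_0^n$, and the Neyman--Pearson lemma gives $\inf_\Psi\big(\mathbb{P}_{H_0}(\Psi=1)+\mathbb{P}_{H_1}(\Psi=0)\big)=1-\|P_0^n-P_1^n\|_{TV}$, where $\|P_0^n-P_1^n\|_{TV}=\tfrac12\int|p_0^n-p_1^n|$. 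As the maximum of two nonnegative numbers is at least their average,
\begin{equation*}
\inf_\Psi \max_{j\in\{0,1\}} \mathbb{P}_{H_j}(\Psi\neq j)\ \geq\ \tfrac12\big(1-\|P_0^n-P_1^n\|_{TV}\big),
\end{equation*}
so it suffices to show $\|P_0^n-P_1^n\|_{TV}\leq 29/30$, which I will do with enormous slack.

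\emph{Second}, I would control the total variation through the $\chi^2$-divergence, which here is exactly computable and tensorises exactly. Since $p_0\equiv 1/\omega_{d-1}$ and $p_1-p_0=\psi/(2\sqrt{\omega_{d-1}n})$,
\begin{equation*}
\chi^2(P_1\|P_0)=\int_{S^{d-1}}\frac{(p_1-p_0)^2}{p_0}\,dx=\frac{1}{4n}\int_{S^{d-1}}\psi(x)^2\,dx\leq\frac{1}{4n},
\end{equation*}
using the hypothesis $\int\psi^2\leq 1$ (and $\int\psi=0$ is not even needed here). By the product identity $1+\chi^2(P_1^n\|P_0^n)=(1+\chi^2(P_1\|P_0))^n\leq(1+\tfrac1{4n})^n\leq e^{1/4}$, and by Cauchy--Schwarz $\|P_1^n-P_0^n\|_{TV}\leq\tfrac12\sqrt{\chi^2(P_1^n\|P_0^n)}\leq\tfrac12\sqrt{e^{1/4}-1}$. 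Since $e^{1/4}<4/3$, the right-hand side is below $\tfrac12\sqrt{1/3}<0.29$, so the displayed lower bound is at least $\tfrac12(1-0.29)>1/60$ with a large margin.

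\emph{Third}, one should note (a remark rather than an obstacle) that $p_1$ is a genuine probability density: it integrates to $1$ because $\int 1/\omega_{d-1}\,dx=1$ and $\int\psi=0$, and it is nonnegative as soon as $n$ is large enough that $\|\psi\|_\infty/(2\sqrt{\omega_{d-1}n})\leq 1/\omega_{d-1}$ — automatic in the intended application, where $n\to\infty$. There is no genuine difficulty in the argument: every step is either an identity or a one-line inequality. The only points requiring care are the bookkeeping of constants — any of the $\chi^2$, Hellinger-affinity, or KL-with-Le-Cam's-$\tfrac14 e^{-\mathrm{KL}}$ routes yields a constant far exceeding $1/60$, so the crude target $1/60$ leaves ample room — and the positivity caveat just mentioned.
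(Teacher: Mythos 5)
Your proof is correct, and its engine is identical to the paper's: both arguments reduce everything to the second moment of the likelihood ratio, i.e.\ the exact tensorisation $1+\chi^2(P_1^n\|P_0^n)=\big(1+\tfrac{1}{4n}\int\psi^2\big)^n\leq e^{1/4}$, which is precisely the computation the paper carries out (there written as $\mathbb{E}\big(\tfrac{dP_{H_1}}{dP_{H_0}}-1\big)^2\leq e^{1/4}-1\leq 0.36$). The only divergence is the standard machinery used to convert this $\chi^2$ bound into a testing lower bound. The paper truncates the likelihood ratio at $\tfrac13$ and applies Chebyshev, yielding $\tfrac16\big(1-\tfrac32\sqrt{e^{1/4}-1}\big)$, which is where the constant $\tfrac1{60}$ comes from; you instead invoke the Neyman--Pearson identity $\inf_\Psi(\mathbb{P}_{H_0}(\Psi=1)+\mathbb{P}_{H_1}(\Psi=0))=1-\|P_0^n-P_1^n\|_{TV}$ together with the Cauchy--Schwarz bound $\|P_0^n-P_1^n\|_{TV}\leq\tfrac12\sqrt{\chi^2(P_1^n\|P_0^n)}$, yielding $\tfrac12\big(1-\tfrac12\sqrt{e^{1/4}-1}\big)>\tfrac13$. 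Your route is marginally cleaner and gives a sharper constant (so $\tfrac1{60}$ follows with room to spare); the paper's truncation device is the more self-contained version, avoiding the appeal to the TV--$\chi^2$ inequality as a named fact. Your side remark on the nonnegativity of $p_1$ for large $n$ is a point the paper leaves implicit and is handled correctly.
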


\begin{proof} We have the following: \begin{align*} \inf_{\Psi} \max_{j \in \{0,1\}} \mathbb{P}_{H_j}(\Psi \neq j) 
&\geq \frac{1}{2} \inf_{\Psi} (\mathbb{E}_{H_0}(\Psi) + \mathbb{E}_{H_1}(1-\Psi)) \\
&\geq \frac{1}{2} \inf_{\Psi} \Bigg(\mathbb{E}_{H_0}(\Psi) + \mathbb{E}_{H_0}\bigg((1-\Psi) \cdot \frac{1}{3} \cdot 1\Big(\frac{dP_{H_1}}{dP_{H_0}} \geq \frac{1}{3}\Big)\bigg)\Bigg) \\
&\geq \frac{1}{2} \inf_{\Psi} \Bigg(\mathbb{E}_{H_0}\bigg((\Psi + (1-\Psi)) \cdot \frac{1}{3} \cdot 1\Big(\frac{dP_{H_1}}{dP_{H_0}} \geq \frac{1}{3}\Big)\bigg)\Bigg) \\
&\geq \frac{1}{6} \Bigg(1 - \mathbb{P}_{H_0}\bigg(\Big|\frac{dP_{H_1}}{dP_{H_0}} - 1\Big| > \frac{2}{3}\bigg)\Bigg) \\
&\geq \frac{1}{6} \bigg(1 - \frac{3}{2}\sqrt{\mathbb{E} \Big(\frac{dP_{H_1}}{dP_{H_0}} - 1\Big)^2}\bigg)
\end{align*} 

But using the fact that $\int_0^1 \psi(x)\,dx=0$ and $\int_0^1 \psi(x)^2\,dx\leq1$.\begin{align*} \mathbb{E} \Big(\frac{dP_{H_1}}{dP_{H_0}} - 1\Big)^2 &= \int_{S_{d-1}^n} \Bigg(\prod_{i=1}^n \bigg(1+\frac{\psi(x)\sqrt{\omega_{d-1}}}{2\sqrt{n}}\bigg) - 1\Bigg)^2\frac{1}{(\omega_{d-1})^n}\,d\mathbf{x} \\
&= \frac{1}{(\omega_{d-1})^n} \int_{S_{d-1}^n} \prod_{i=1}^n \bigg(1+\frac{\psi(x)\sqrt{\omega_{d-1}}}{2\sqrt{n}}\bigg)^2 - 2 \prod_{i=1}^n \bigg(1+\frac{\psi(x)\sqrt{\omega_{d-1}}}{2\sqrt{n}}\bigg) + 1\,d\mathbf{x} \\
&= \frac{1}{(\omega_{d-1})^n} \Bigg(\int_{S_{d-1}} \bigg(1+\frac{\psi(x)\sqrt{\omega_{d-1}}}{2\sqrt{n}}\bigg)^2\,dx\Bigg)^n - 1 \\
&\leq \Big(1 + \frac{1}{4n}\Big)^n - 1 \leq \exp\Big(\frac{1}{4}\Big) - 1 \leq 0.36
\end{align*}
Putting these two estimates together yields the result.\end{proof}

We relate the testing problem to that of density estimation in the following way:

\begin{proof}[Proof of Theorem~\ref{minimax}] For notational convenience, we only prove this for $0< t \leq 1$. We will use the definitions of global Besov spaces from \cite{decomp}. Recall that $H_k(S^{d-1})$ is the space of spherical harmonics of degree $k$ and let $\Pi_n = \bigoplus_{k=0}^n H_k(S^{d-1})$. Then we define the approximation error of a function $f$ by polynomials in this class to be:
\begin{equation*}
E_k(f) = \inf_{P \in \Pi_n} ||f - P||_\infty
\end{equation*}

and the Besov space $B^t_{\infty\infty}$:

\begin{definition} $f \in B^t_{\infty\infty}(M)$ if and only if $\sup_k k^t E_k(f) \leq M$ \end{definition}

Let $2^{i+1} \sim n^{1/(2t+d-1)}$ and consider $\psi(x) := \psi_{i\eta}(x)$. It is a polynomial of degree $2^{i+1}$, so for $k \geq 2^{i+1}$, $E_k(1 + \frac{\psi(x)}{\sqrt{n}}) = 0$. Since by Equation~\ref{psiimpt}, $||\psi||_\infty \leq c_1 2^{i(d-1)/2}$, so we can use the polynomial approximation $1$ to obtain:
\begin{equation*}
\sup_k k^t E_k\left(1 + \frac{\psi}{\sqrt{n}}\right) \leq \frac{2^{t(i+1)}}{\sqrt{n}} c_1 2^{i(d-1)/2} \leq c_1
\end{equation*} 

Thus $1 + \frac{\psi}{\sqrt{n}} \in B^t_{\infty\infty}(c_1)$ for all $i$. Now, by Theorem 5.5 in~\cite{Geller}, such functions are also in $C^t_M(S^{d-1})$ (assuming $0< t \leq 1$). Hence:

\begin{align*}
\liminf_n \inf_{\hat{f}\in\mathcal{F}_n} \sup_{f\in C_{M,\delta}^t(\eta)} \mathbb{E}_f \Big|n^{\frac{t}{2t+d-1}} (\hat{f}(X_1, X_2,\ldots,X_n) - f(\eta))\Big| \\ \geq \liminf_n \inf_{\hat{f}\in\mathcal{F}_n} \max_{f=1~\text{or}~1 + \frac{\psi}{\sqrt{n}}} \mathbb{E}_f \Big|n^{\frac{t}{2t+d-1}} (\hat{f}(X_1, X_2,\ldots,X_n) - f(\eta))\Big|
\end{align*}

Now we construct a test $\Psi_n = 1(\hat{f}(X_1, X_2,\ldots,X_n) > 1 + \frac{\psi(\eta)}{4\sqrt{n}})$. Also, $||\psi_{i\eta}||_2 \leq 1$ (Lemma~\ref{quad}) and $\int_{S^{d-1}} \psi_{i\eta} = 0$. Thus, by the Proposition~\ref{blahblah}, we have:
\begin{align*}
&\max_{j \in \{0,1\}} \bigg\{\mathbb{P}\Big(\hat{f} \leq 1 + \frac{\psi(\eta)}{4\sqrt{n}}|f = 1 + \frac{\psi}{2\sqrt{n}}\Big),\,\mathbb{P}\Big(\hat{f} > 1 + \frac{\psi(\eta)}{4\sqrt{n}}|f = 1\Big)\bigg\} \geq \frac{1}{60} \\
&\Rightarrow \max_{f=1~\text{or}~1 + \frac{\psi}{\sqrt{n}}} \mathbb{P} \bigg(|\hat{f} - f(\eta)| \geq \frac{\psi(\eta)}{4\sqrt{n}}\bigg) \geq \frac{1}{60} \\
&\Rightarrow \max_{f=1~\text{or}~1 + \frac{\psi}{\sqrt{n}}} \mathbb{E}(|\hat{f} - f(\eta)|) \geq \frac{\epsilon \psi(\eta)}{240 \sqrt{n}}
\end{align*}
Combining everything we have, and using Lemma~\ref{lastbutnotleast} above that there exists $C_7$ such that $\psi_{i\eta}(\eta) \geq C_7 2^{i(d-1)/2} \sim C_7 \frac{n^{(d-1)/2(2t+d-1)}}{2^{(d-1)/2}}$, we get:
\begin{align*}
\liminf_n \inf_{\hat{f}\in\mathcal{F}_n} \sup_{f\in C_{M,\delta}^t(\eta)} \mathbb{E}_f \Big|n^{\frac{t}{2t+d-1}} (\hat{f}(X_1, X_2,\ldots,X_n) - f(\eta))\Big| &\geq \liminf_n \frac{\epsilon \psi(\eta)}{240 \sqrt{n}} n^{\frac{t}{2t+d-1}}
\\&\geq \frac{\epsilon C_7}{240 \cdot 2^{(d-1)/2}}
\end{align*}
which is the bound needed.\end{proof}


\begin{thebibliography}{99}
\bibitem{char} \textsc{Andersson} Characterization of pointwise h\"{o}lder regularity. \textit{Applied and Computational Harmonic Analysis 4} 1997, 424-443
\bibitem{Baldi} \textsc{Baldi}, \textsc{Kerkyacharian}, \textsc{Marinucci}, \textsc{Picard} Adaptive density estimation for directional data using needlets. \textit{The Annals of Statistics Vol 39} 2009, 3362-3395
\bibitem{first2} \textsc{Donoho}, \textsc{Johnstone}, \textsc{Kerkyacharian} and \textsc{Picard} (1996) Density estimation by wavelet thresholding. \textit{The Annals of Statistics Vol 24} p.508-539
\bibitem{faraut} \textsc{Faraut}, Analysis on lie groups: an introduction. \textit{Cambridge Studies in Advanced Mathematics, Cambridge University Press} 2008, 189
\bibitem{Geller} \textsc{Geller}, \textsc{Mayeli}, Continuous wavelets on compact manifolds. \textit{Mathematische Zeitschrift 262} 2009, 895-927
\bibitem{ugh} \textsc{Gin\'e}, \textsc{Nickl}, An exponential inequality for the distribution function of the kernel density estimator, with applications to adaptive estimation. \textit{Probability Theory and Related Fields 143} 2009, 569-596.
\bibitem{nickl} \textsc{Gin\'e}, \textsc{Nickl}, Confidence bands in density estimation. \textit{The Annals of Statistics Vol 38} 2010, 1122-1170
\bibitem{cross} \textsc{Hall}, \textsc{Watson}, \textsc{Cabrera} Kernel density estimation with spherical data. \textit{Biometrika 74} 1987, 751-62
\bibitem{winner} \textsc{Hoffmann}, \textsc{Nickl}, On adaptive inference and confidence bands. \textit{The Annals of Statistics Vol 39} 2011, 2383-2409
\bibitem{Jaffard2} \textsc{Jaffard} Wavelet techniques for pointwise regularity. \textit{Annales de la Facult\'{e} des Sciences de Toulouse Vol XV} 2006, 3-33
\bibitem{nickl2} \textsc{Kerkyacharian}, \textsc{Nickl}, \textsc{Picard}, Concentration inequalities and confidence bands for needlet density estimators on compact homogeneous manifolds. \textit{Probability Theory and Related Field}, to appear.
\bibitem{plug} \textsc{Klemel\"{a}} Estimation of densities and derivatives of densities with directional data. \textit{Journal of Multivariate Analysis 73} 2000, 18-40.
\bibitem{Lepski} \textsc{Lepski}, \textsc{Mammen}, \textsc{Spokoiny}, Optimal spatial adaptation to inhomogeneous smoothness: an approach based on kernel estimates with variable bandwidth selectors. \textit{The Annals of Statistics Vol 25} 1997, 929-947. 
\bibitem{Low} \textsc{Low}, On nonparametric confidence intervals. \textit{The Annals of Statistics Vol 25} 1997, 2547-2554
\bibitem{decomp} \textsc{Narcowich}, \textsc{Petrushev}, \textsc{Ward} Decomposition of Besov and Triebel-Lizorkin spaces on the sphere. \textit{Journal of Functional Analysis 238} 2006, 530-564
\bibitem{local} \textsc{Narcowich}, \textsc{Petrushev}, \textsc{Ward}, Localized tight frames on spheres. \textit{Siam J Math. Anal. Vol 38} 2006, 574-594
\bibitem{more} \textsc{Picard}, \textsc{Tribouley}, Adaptive confidence interval for pointwise curve estimation. \textit{The Annals of Statistics Vol 28} 2000, 298-335
\bibitem{Stein} \textsc{Stein}, \textsc{Weiss} Introduction to Fourier Analysis on Euclidean Spaces \textit{Princeton} 1971, 137-150.

\end{thebibliography}
\end{document}